\theoremstyle{thmstyleone}%
\newtheorem{theorem}{Theorem}%  meant for continuous numbers
\newtheorem{corollary}{Corollary}% 
\newtheorem{lemma}{Lemma}
\theoremstyle{thmstyletwo}%
\theoremstyle{thmstylethree}%
\newtheorem{definition}{Definition}%
\begin{document}

\title{Subnetwork Constraints for Tighter Upper Bounds and Exact Solution of the Clique Partitioning Problem}

%\titlerunning{Subnetwork Constraints for CPP}

%%=============================================================%%
%% Prefix	-> \pfx{Dr}
%% GivenName	-> \fnm{Joergen W.}
%% Particle	-> \spfx{van der} -> surname prefix
%% FamilyName	-> \sur{Ploeg}
%% Suffix	-> \sfx{IV}
%% NatureName	-> \tanm{Poet Laureate} -> Title after name
%% Degrees	-> \dgr{MSc, PhD}
%% \author*[1,2]{\pfx{Dr} \fnm{Joergen W.} \spfx{van der} \sur{Ploeg} \sfx{IV} \tanm{Poet Laureate} 
%%                 \dgr{MSc, PhD}}\email{iauthor@gmail.com}
%%=============================================================%%

\author*[1,2,3]{\fnm{Alexander} \sur{Belyi}}\email{bely@math.muni.cz}
\author[1,4,5]{\fnm{Stanislav} \sur{Sobolevsky}}\email{ss9872@nyu.edu}
\author[3]{\fnm{Alexander} \sur{Kurbatski}}\email{kurbatski@bsu.by}
\author[6]{\fnm{Carlo} \sur{Ratti}}\email{ratti@mit.edu}

\affil*[1]{\orgdiv{Department of Mathematics and Statistics}, \orgname{Faculty of Science, Masaryk University}, \orgaddress{\street{Kotlarska~2}, \city{Brno}, \postcode{611\,37}, \country{Czech Republic}}}

\affil[2]{\orgdiv{Senseable City Lab}, \orgname{FM IRG, SMART Centre}, \orgaddress{\street{1~Create~Way}, \postcode{138602}, \country{Singapore}}}

\affil[3]{\orgdiv{Faculty of Applied Mathematics and Computer Science}, \orgname{Belarusian State University}, \orgaddress{\street{4 Nezavisimosti Avenue}, \city{Minsk}, \postcode{220030}, \country{Belarus}}}

\affil[4]{\orgdiv{Institute of Law and Technology}, \orgname{Faculty of Law, Masaryk University}, \orgaddress{\street{Veveri 70}, \city{Brno}, \postcode{611\,80}, \country{Czech Republic}}}

\affil[5]{\orgdiv{Center for Urban Science and Progress}, \orgname{New York University}, \orgaddress{\street{370~Jay~Street}, \city{Brooklyn}, \postcode{11201}, \state{NY}, \country{USA}}}

\affil[6]{\orgdiv{Senseable City Lab}, \orgname{Massachusetts Institute of Technology}, \orgaddress{\street{77~Massachusetts~Avenue}, \city{Cambridge}, \postcode{02139}, \state{MA}, \country{USA}}}

\abstract{
We consider a variant of the clustering problem for a complete weighted graph.
The aim is to partition the nodes into clusters maximizing the sum of the edge weights within the clusters.
This problem is known as the clique partitioning problem, being NP-hard in the general case of having edge weights of different signs.
We propose a new method of estimating an upper bound of the objective function that we combine with the classical branch-and-bound technique to find the exact solution.
We evaluate our approach on a broad range of random graphs and real-world networks.
The proposed approach provided tighter upper bounds and achieved significant convergence speed improvements compared to known alternative methods.
}

\keywords{Clustering, Graphs, Clique partitioning problem, Community detection, Modularity, Upper bounds, Exact solution, Branch and bound, Linear programming}

\pacs[MSC Classification]{05C85, 68T09, 68R10, 90C35, 90C90, 91C20}
%\subclass{05C85 \and 68T09 \and 68R10 \and 90C35 \and 90C90 \and 91C20}

\maketitle

\section{Introduction\label{sect:Intro}}

Clustering is one of the fundamental problems in data analysis and machine learning~\citep{Clustering1999}.
In general terms, clustering means grouping similar objects together.
At the same time, many real-world systems could be represented as networks~\citep{newman2018networks} or graphs.
Traditionally, \added[id=A]{the} word ``graph'' refers to the mathematical model of the underlying network, but we will use these terms interchangeably as synonyms.
Graphs are a powerful mathematical model often used to study a broad range of objects and their relations.
So, the clustering of real-world objects is often modeled and formulated mathematically as the clustering of vertices of a graph.
If it is possible to quantify the similarity between objects, one can construct a complete graph where vertices correspond to the objects, and edge weights represent their similarity.
In this case, the clustering problem could be formulated as a clique partitioning problem~\citep{Grotschel1989cutting,Grotschel1990facets}.

Formally, given a complete weighted graph, the clique partitioning problem~(CPP) is to find such a partition of vertices into groups (or clusters or modules) that maximizes the sum of weights of edges connecting vertices within the same groups.
Obviously, this problem is not trivial only when the graph has both positive and negative edge weights.
In the literature, this problem is known under different names, including clique partitioning, correlation clustering, and signed graph clustering~\citep{hausberger2022distributed}.

In a more general case, when a system could still be represented as a network of connections between nodes, but similarities between objects are not given, the clustering problem spawned a separate field of research known as community detection in networks~\citep{fortunato2010}.
There are many approaches to community detection, but one of the most widely adopted is to define a similarity or a strength of the connection between nodes and then optimize the sum of these strengths within clusters.
Probably the most well-known quality function of such partitioning is modularity~\citep{Girvan2002PNAS,Newman2004,Newman2006PNAS}.
For each pair of nodes, a modularity score is defined as a normalized difference between actual edge weight and expected weight in a random graph that preserves node degrees.
Modularity of a partition is then just a sum of modularity scores of pairs of nodes placed in the same cluster.
The problem of finding an optimal partition in terms of modularity can now be formulated as the clique partitioning problem in a graph whose edge weights correspond to modularity scores.

There are many real-world applications of CPP.
The most famous, including those studied in original works by~\cite{Grotschel1989cutting}, come from biology, group technology~\citep{Oosten2001,Wang2006}, and transportation~\citep{dorndorf2008modelling}.
Community detection done through modularity maximization solved as CPP could apply to areas ranging from geo-informatics~\citep{belyi2016flickr,belyi2017IJGIS} and tourism management~\citep{Xu2021KoreanTourists} to biochemistry~\citep{Guimera2005FunctionalCartography} and the study of social networks~\citep{Girvan2002PNAS}.
The practical usefulness of the problem continues to attract researchers' attention.
However, solving CPP is hard.

NP-hardness of CPP has been known since~\cite{wakabayashi1986PhD}.
And the same result was later proven for modularity maximization too~\citep{Brandes2008}.
Thus, most of the scholars' efforts have been aimed at developing heuristic approaches that allow finding relatively good solutions relatively quickly.
Among such approaches were simulated annealing and tabu search~\citep{deAmorim1992,gao2022improvingSA}, ejection chain and Kernighan-Lin heuristic~\citep{Dorndorf1994}, noising method~\citep{Charon2006}, neighborhood search~\citep{brusco2009neighborhood,Brimberg2017}, iterative tabu search~\citep{palubeckis2014ITS}, and their combinations~\citep{Zhou2016CPP-P3}.
Usually, graphs considered in operational research are not too big, comprising hundreds to a few thousands of nodes, and the quality of approximate solutions is high~\citep{Zhou2016CPP-P3,hu2020two-model,lu2021hybrid}.
At the same time, in network science, graphs could be extremely large, spanning over millions of nodes.
Therefore, an extensive search for the solutions close to optimal is not feasible for such networks, and often methods able to provide reasonably good solutions in manageable time are favored~\citep{leuven}.
However, some methods try to stay within reasonable time limits while delivering solutions close to optimal~\citep{Combo,GNNS,aref2023heuristic}.

Given the NP-hardness of CPP, exact solutions are rarely proposed.
Most of the existing approaches utilize the branch-and-bound method~\citep{Dorndorf1994,Jaehn2013CPP} or cutting plane technique~\citep{Grotschel1989cutting,Oosten2001}.
Many use both methods through the means of optimization software packages that internally implement them~\citep{Du2021SolvingCPPComparison}.
Usually, such works propose some extra steps to make the problem easier to solve with standard packages~\citep{Miyauchi2018exact,Lorena2019,Belyi2022SizeReduction}.
A few approaches were proposed for slight variations of CPP with different constraints, like branch-and-price for the capacitated or graph-connected version~\citep{mehrotra1998cliques,Benati2022BranchPrice} or branch-and-price-and-cut method for CPP with minimum clique size requirement~\citep{ji2007branchPriceCut}.
Integer programming models for clustering proved to be a useful tool~\citep{pirim2018JOTA}, and the vast majority of approaches try to solve CPP formulated as an integer linear programming~(ILP) problem~\citep{Grotschel1989cutting,Oosten2001,Miyauchi2018exact,Du2021SolvingCPPComparison}.
Researchers used similar methods in network science to maximize modularity~\citep{Agarwal2008,Aloise2010,dinh2015redundant,Lorena2019}.
In their seminal work, \cite{Agarwal2008} proposed solving the relaxation of~ILP to linear programming~(LP) problem and then rounding solution to integers.
They described a rounding algorithm that can provide a feasible solution to the initial problem, which (after applying local-search post-processing), in many cases, could achieve high modularity.
\added[id=A]{In the most recent work, \cite{aref2022bayan} proposed the Bayan algorithm grounded in an ILP formulation of the modularity maximization problem and relying on the branch-and-cut scheme for solving the problem to global optimality.}
While finding the \replaced[id=A]{global}{exact} maximum is unfeasible for large networks, studies in community detection showed that just providing the upper bound on achievable modularity could be useful by itself, and a few approaches were proposed recently~\citep{Miyauchi2013,Sobolevsky2017TrianglesArxiv}.

In this work, we present a new method for finding an upper bound on values that the objective function of CPP could reach.
By further developing the idea proposed by~\cite{Sobolevsky2017TrianglesArxiv}, we base our approach on combining known upper bounds of small subnetworks to calculate the upper bound for the whole network.
We describe how to use obtained upper bounds to construct the exact solution of CPP.
The proposed method is similar to the algorithm of \cite{Jaehn2013CPP} and its further development by~\cite{Belyi2019JBSU}.
However, by significantly improving the upper bound's initial estimates and recalculation procedure, it achieves a decrease of a couple of orders of magnitude in computational complexity and execution time.
Moreover, we show that our algorithm can find exact solutions to problems that the algorithm from~\cite{Jaehn2013CPP} could not.
In the end, we discuss possible directions of future research and show how adding new subnetworks could improve upper bound estimates.

\section{Problem Formulation and Existing Solution Approaches}\label{sect:formulation}

%\subsection{Clique Partitioning and Modularity}
We consider the following problem.
Given a complete weighted undirected graph $G=(V, E, W)$, where $V=\{1\dots n\}$ is a set of vertices, $E=\{\{i, j\}\mid i,j\in V, i\ne j\}$ is a set of edges, $W=\{w_{ij} \in \mathbb{R} \mid \{i,j\} \in E, w_{ij}=w_{ji}\}$ is a set of weights of edges, find such a partition of its vertices $V$ into clusters (represented by a mapping function $C: V \rightarrow \mathbb{N}$ from vertices into cluster labels $c_v=C(v)$) that sum of edge weights within the clusters is maximized:
\begin{equation}
Q(G, C) = \sum_{1\le i<j\le n \mid c_i=c_j}{w_{ij}} \rightarrow \max \label{eq:Q}.
\end{equation}
We denote this sum as $Q$ and will refer to it as the partition quality function or CPP objective function.
Note that this problem can be defined for any graph by adding edges with zero weight, ignoring loop edges, and averaging the weights of incoming and outgoing edges.
We will say that in a given partition, an edge is \textit{included} (because its weight is included in the sum in equation~(\ref{eq:Q})) if it connects two nodes from the same cluster.
Otherwise, we will say that it is \textit{excluded}.
Also, we use the words graph/network and vertex/node interchangeably here and throughout the rest of the text.

%\subsection{Linear Programming Formulation}

\cite{Grotschel1989cutting} showed that CPP can be formulated as the following integer linear programming~\added[id=A]{(ILP)} problem.
For every edge $\{i, j\}$, we define a binary variable $x_{ij}$ that equals $1$ when the edge is included and $0$ otherwise.
Then the objective of CPP is to
\begin{equation}
\begin{split}
\text{maximize } & Q=\sum_{1\le i<j\le n}{w_{ij}\cdot x_{ij}},\\
\text{subject to } & x_{ij} + x_{jk} - x_{ik} \leq 1, \text{ for all } 1 \leq i < j < k \leq n \\
& x_{ij} - x_{jk} + x_{ik} \leq 1, \text{ for all } 1 \leq i < j < k \leq n \\
& -x_{ij} + x_{jk} + x_{ik} \leq 1, \text{ for all } 1 \leq i < j < k \leq n \\
& x_{ij} \in \{0,1\}, \text{ for all } 1 \leq i < j \leq n.
\end{split}
\label{eq:ILP}
\end{equation}
Constraints are called \textit{triangle inequalities} and ensure consistency of partition, i.e., if both edges $\{i, j\}$ and $\{j, k\}$ are included, then edge $\{i, k\}$ must be included too.

ILP formulation~(\ref{eq:ILP}) has been employed by many algorithms for CPP and its variants.
In their article, \cite{Grotschel1989cutting} empirically showed that many constraints are not saturated in the optimal solution and are redundant for the problem.
More recently, \cite{dinh2015redundant} derived a set of redundant constraints in formulation~(\ref{eq:ILP}) for modularity optimization, then \cite{Miyauchi2015Redundant} generalized Dinh and Thai’s results to the general case of CPP, and recently~\cite{koshimura2022concise} proposed even more concise formulation of ILP.
Developing their idea further, \cite{Miyauchi2018exact} proposed an exact algorithm that solves a modified ILP problem and then performs simple post-processing to produce an optimal solution to the original problem.

Extending the results of \cite{Grotschel1990facets}, \cite{Oosten2001} studied the polytope of~(\ref{eq:ILP}) and described new classes of facet-defining inequalities that could be used in a cutting plane algorithm.
\cite{sukegawa2013lagrangian} proposed a size reduction algorithm for (\ref{eq:ILP}) based on the Lagrangian relaxation and pegging test.
They showed that for some instances of CPP, their algorithm, which minimizes the duality gap, can find an exact solution.
For the other cases, they provided an upper bound of the solution.
Even without an exact solution, knowing an upper bound could be useful by itself~\citep{Miyauchi2013}.
For example, it allows estimating how good a particular solution found by a heuristic is.
The most common way to obtain upper bounds is to solve the problem~(\ref{eq:ILP}) with relaxed integrity constraints (i.e., when constraints $x_{ij}\in\{0,1\}$ are replaced with $x_{ij}\in[0,1]$). We refer to this version of the problem as the relaxed problem~(\ref{eq:ILP}).
In this case, the problem becomes an LP problem and can be solved in polynomial time using existing methods~\citep{Miyauchi2013}.

Looking at the problem from a different angle, \cite{Dorndorf1994} and \cite{Jaehn2013CPP} did not use formulation~(\ref{eq:ILP}).
Instead, they approached CPP as a combinatorial optimization problem and employed constraint programming to solve it.
In some sense, our approach combines both ideas:
we will show how to obtain tight upper bounds by solving another linear programming problem and then use the branch-and-bound method to solve CPP.

\section{Upper Bound Estimation}

In the general case of CPP, there is no theoretical limit on what values the quality function can reach since edge weights could be arbitrarily large.
For modularity scores, however, ~\cite{Brandes2008} proved that $-1/2\leq Q \leq 1$.
In practice, though, for every network $G=(V, E, W)$, a \textit{trivial upper bound} $\overline{Q}$ could be obtained simply as a sum of all positive edges:
\begin{equation}\label{eq:Q_trivial_max}
    \overline{Q}(G) = \sum_{\{i, j\}\in E \mid w_{ij}>0}{w_{ij}} \geq Q(G, C), \text{for any } C.
\end{equation}
But even this threshold is usually quite far above the actual maximum.
To further reduce this upper bound, \cite{Jaehn2013CPP} used triples of vertices in which two edges are positive (i.e., have positive weights) and one is negative (i.e., has negative weight).
However, their approach considered only edge-disjoint triples\added[id=A]{, i.e., triples of nodes that have no more than one common node and thus no shared edges}.
In what follows, we are developing a similar idea and generalizing it further.
We show how any subgraph, for which we know the upper bound of its partition quality function, could be used to reduce the upper bound for the whole network.
Furthermore, we also show how to account for overlaps of such subgraphs.
We start by introducing a few definitions.

\begin{figure}[h]
\centering
\includegraphics[width=1\textwidth]{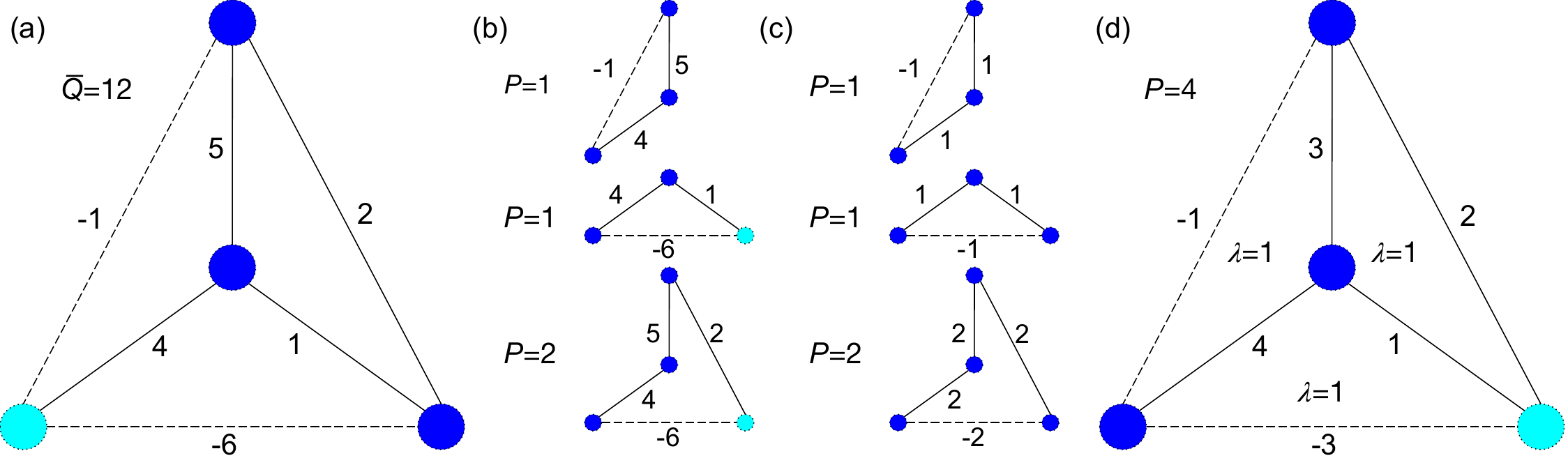}
\caption{\label{fig:Method}Illustration of definitions: a) Original network; b) Penalizing subnetworks (two triangles and a chain); c) Reduced subnetworks; d) Permissible linear combination of reduced subnetworks with all weights $\lambda$ equal to $1$. \added[id=A]{Colors indicate one of the possible optimal partitions. While subnetworks' optimal partitions do not determine the optimal partition of the network, their penalties can be used to estimate the network's penalty.}}
\end{figure}

\begin{definition}\label{Def:SN}
A \textit{subnetwork} $S=(V^*\subseteq V, E^*, W^*)$ is a complete network built on a subset of nodes of the original network $G=(V, E, W)$ defined in Section~\ref{sect:formulation}, where each edge $\{i, j\}\in E^*$ has weight $w^*_{ij}\in W^*$ such that $\lvert w^*_{ij} \rvert\leq \lvert w_{ij} \rvert$, $w_{ij}\in W$ and $w^*_{ij} \cdot w_{ij}\geq 0$, i.e., weights of subnetwork's edges have smaller or equal absolute values and the same sign (unless the weight is zero) as weights in the original network.
\end{definition}
For example, networks in Fig.~\ref{fig:Method}b and in Fig.~\ref{fig:Method}c are subnetworks of a network in Fig.~\ref{fig:Method}a.
%\begin{definition}\label{Def:UBEstimate}
%An \textit{upped bound estimate} $Q_{max}$ of the CPP objective function for network $G$ is 
%\end{definition}
For small networks with just a few nodes or with a simple structure, it is often easy to find the exact solution of CPP, e.g., by considering all possible partitions.
For some more complex networks, when finding the exact solution is already complicated, it might still be possible to prove tighter upper bound estimates $Q_{max}$ than the trivial one ($Q(C)\leq Q_{max} < \overline{Q}$ for any partition $C$). Our idea is to find such networks among subnetworks of the original graph $G$ and use their upper bound estimates to prove an estimate for~$G$.

\begin{definition}\label{Def:Penalty}
For subnetwork $S$ with an upper bound estimate $Q_{max}$, its \textit{penalty}~$(P)$ is the difference between the trivial upper bound of the objective function given by formula~(\ref{eq:Q_trivial_max}) and $Q_{max}$: $P=\overline{Q}(S)-Q_{max}$.
\end{definition}

We call a subnetwork with a positive penalty a \textit{penalizing subnetwork}.
For example, it is easy to see the best partitions of subnetworks in Fig.~\ref{fig:Method}b: we either keep all vertices in one cluster, including the negative edge or split them into two clusters, excluding the negative edge and the smallest positive edge.
This way, we know the optimal value of objective function $Q$, which we can use as a sharp upper bound.
Then the penalty is just the difference between the sum of positive edge weights and this value of $Q$.
We call any subnetwork of a given penalizing subnetwork $S$ having the same penalty as $S$ a \textit{reduced subnetwork}.
For example, subnetworks in Fig.~\ref{fig:Method}c are reduced subnetworks of their counterparts from Fig.~\ref{fig:Method}b.
\added[id=A]{The benefit of using them will become clearer by the end of this section.}

\begin{definition}\label{Def:LS}
Given a set of subnetworks $\{S_k=(V_k, E_k, W_k) \mid k=1\dots K\}$ of graph $G$, their \textit{permissible linear combination} $S^L=(V^L, E^L, W^L)$ with non-negative coefficients $\lambda_k$ is a subnetwork of the original network $G$ with all the same nodes $V^L=\bigcup_{k=1}^K{V_k}$ and edge weights equal to linear combinations $w^L_{ij}=\sum_{k=1}^K{\lambda_k w^{*k}_{ij}}$ of the corresponding weights 
$w^{*k}_{ij}=
\begin{cases}
  w^{k}_{ij} & \text{if } \{i, j\}\in E_k\\
  0 & \text{otherwise}
\end{cases}$.
\end{definition}

\added[id=A]{
The intuition here is that we re-weight and combine several subnetworks to get one.
With some abuse of notation, it can be written that $S^L=\sum_{k=1}^K \lambda_k S_k$, where multiplying a subnetwork by a scalar means multiplying all edge weights by this scalar, and summing subnetworks means uniting vertex sets and summing corresponding edge weights.
}
For example, the network in Fig.~\ref{fig:Method}d is a permissible linear combination of reduced subnetworks from Fig.~\ref{fig:Method}c.

Now it is easy to see that the following proposition holds:

\begin{lemma}[Summation lemma]\label{thrm:SummationLemma}
Consider a set of subnetworks of graph $G$ $\{S_1, S_2,..., S_K\}$ with penalties $P_1, P_2, ..., P_K$, and a permissible linear combination $S^L=\sum_{k=1}^K{\lambda_k S_k}$ with non-negative $\lambda_k$.
Then $S^L$ has a penalty greater or equal to $\sum_{k=1}^K{\lambda_k P_k}$.
\end{lemma}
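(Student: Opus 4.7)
The plan is to exploit the linearity of the CPP objective in the edge weights, together with the sign-preservation built into Definition~\ref{Def:SN}. First I would observe that for any fixed partition $C$ of the node set, the quality $Q_{S}(C) = \sum_{i<j,\,c_i=c_j} w^{*}_{ij}$ is linear in the edge weights of $S$. Applying this to $LS$ and using $w^{*LS}_{ij} = \sum_k \lambda_k w^{*k}_{ij}$ (extending each $S_k$ to the full node set with zero weights on absent edges, which is harmless) gives the identity $Q_{LS}(C) = \sum_k \lambda_k Q_{S_k}(C)$. By Definition~\ref{Def:Penalty}, $Q_{S_k}(C) \leq \overline{Q}(S_k) - p_k$ holds for every partition $C$, so summing with the non-negative weights $\lambda_k$ yields the inequality $Q_{LS}(C) \leq \sum_k \lambda_k \overline{Q}(S_k) - \sum_k \lambda_k p_k$, valid for every partition $C$.

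Next I would establish the identity $\sum_k \lambda_k \overline{Q}(S_k) = \overline{Q}(LS)$. Here the sign constraints $w^{*k}_{ij}\cdot w_{ij} \geq 0$ from Definition~\ref{Def:SN} are essential: combined with $\lambda_k \geq 0$, they ensure that $w^{*LS}_{ij}$ is positive precisely when $w_{ij}$ is positive (and analogously for negative loops), with no cancellations between the $w^{*k}_{ij}$. Thus $\overline{Q}(LS) = \sum_{w_{ij}>0} w^{*LS}_{ij} = \sum_{w_{ij}>0} \sum_k \lambda_k w^{*k}_{ij}$, and swapping the order of summation produces exactly $\sum_k \lambda_k \overline{Q}(S_k)$. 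Substituting this into the inequality from the previous paragraph gives $Q_{LS}(C) \leq \overline{Q}(LS) - \sum_k \lambda_k p_k$ for every partition $C$, so $\overline{Q}(LS) - \sum_k \lambda_k p_k$ is itself a proven upper bound on the quality function of $LS$; by Definition~\ref{Def:Penalty} the penalty of $LS$ is then at least $\sum_k \lambda_k p_k$.

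The main obstacle I anticipate is the sign-and-cancellation step in the second paragraph: if $\lambda_k$ were allowed to be negative, or if the subnetworks were permitted to carry edges of sign opposite to the original, the identity $\overline{Q}(LS) = \sum_k \lambda_k \overline{Q}(S_k)$ could break, because positive contributions on one edge could offset negative contributions elsewhere and the trivial upper bound would no longer decompose linearly. Verifying that the joint hypotheses of Definitions~\ref{Def:SN} and~\ref{Def:LS} preclude this is the only non-routine point; the remainder of the argument is an exchange of summations and a single use of the definition of penalty.
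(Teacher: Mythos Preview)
Your proposal is correct and follows essentially the same route as the paper's own proof: linearity of $Q$ in the edge weights gives $Q_{LS}(C)=\sum_k\lambda_k Q_{S_k}(C)$, then each $Q_{S_k}(C)\le \overline{Q}(S_k)-p_k$ is invoked, and finally $\sum_k\lambda_k\overline{Q}(S_k)=\overline{Q}(LS)$. If anything, you are more careful than the paper, which simply asserts the last equality; your justification via the sign constraints is the right one (the minor phrasing ``positive precisely when $w_{ij}$ is positive'' should be ``non-negative, and positive only if $w_{ij}$ is positive'', but the zero terms do not affect the summation identity you write next).
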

\begin{proof}
Indeed, for each subnetwork $S_k$ denote the upper bound estimate corresponding to $P_k$ as $Q_{max}^k$.
Then for any partition $C$ of the subnetwork $S^L$, score $Q$ can be expressed as
$Q(S^L, C) = \sum_{1\le i<j\le n \mid \{i, j\}\in E^L, c_i=c_j}{w^L_{ij}} = \sum_{1\le i<j\le n \mid \{i, j\}\in E^L, c_i=c_j}{\sum_{k=1}^K{\lambda_k w^{*k}_{ij}}} = \sum_{k=1}^K{\lambda_k \sum_{1\le i<j\le n \mid \{i, j\}\in E_k, c_i=c_j}{w^{k}_{ij}}} = \sum_{k=1}^K{\lambda_k Q(S_k, C)} \leq \sum_{k=1}^K{\lambda_k Q_{max}^k}$,
so $Q^L_{max}=\sum_{k=1}^K{\lambda_k Q_{max}^k}$ is an upper bound for subnetwork $S^L$,
and since for the trivial upper bounds (\ref{eq:Q_trivial_max}) $\overline{Q}(S^L) = \sum_k{\lambda_k \overline{Q}(S_k)}$, $S^L$ has penalty $P^L = \overline{Q}(S^L)-Q^L_{max} = \sum_{k=1}^K{\lambda_k \overline{Q}(S_k)}-\sum_{k=1}^K{\lambda_k Q_{max}^k} = \sum_{k=1}^K{\lambda_k \left(\overline{Q}(S_k)-Q_{max}^k\right)} = \sum_{k=1}^K{\lambda_k P_k}$.
\end{proof}

Summation lemma allows us to prove a stronger result:

\begin{theorem}\label{thrm:PenaltyTheorem}
Consider graph $G$, a set of its subnetworks $\{S_1, S_2,..., S_K\}$ with penalties $P_1, P_2, ..., P_K$, and a permissible linear combination $S^L=\sum_{k=1}^K{\lambda_k S_k}$ with non-negative $\lambda_k$.
Then $G$ has a penalty greater or equal to $\sum_{k=1}^K{\lambda_k P_k}$.
\end{theorem}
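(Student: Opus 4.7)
The plan is to reduce Theorem~\ref{thrm:PenaltyTheorem} to Lemma~\ref{thrm:SummationLemma} by decomposing the original network $G$ into the permissible linear combination $LS$ and a complementary ``residual'' subnetwork. First I would introduce the residual subnetwork $R$ on the same node set as $G$ with edge weights $w^R_{ij} = w_{ij} - w^*_{ij}$, where $w^*_{ij} = \sum_k \lambda_k w^{*k}_{ij}$ are the weights of $LS$. The first task is to verify that $R$ satisfies Definition~\ref{Def:SN}. Because $LS$ is permissible, $w^*_{ij}$ has the same sign as $w_{ij}$ and $|w^*_{ij}| \le |w_{ij}|$, so $w^R_{ij}$ also has the same sign as $w_{ij}$ with $|w^R_{ij}| = |w_{ij}| - |w^*_{ij}| \le |w_{ij}|$. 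Hence $R$ is itself a valid subnetwork of $G$.

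Next I would note the key additivity identity on weights: for every edge, $w_{ij} = w^*_{ij} + w^R_{ij}$, and since both summands have the same sign as $w_{ij}$, the trivial upper bounds (sums of positive edge weights plus negative self-loop weights, as in~(\ref{eq:Q_trivial_max})) satisfy
\begin{equation*}
\overline{Q}(G) = \overline{Q}(LS) + \overline{Q}(R).
\end{equation*}
For an arbitrary partition $C$ of $V$, the CPP objective values decompose in the same linear way: $Q_G(C) = Q_{LS}(C) + Q_R(C)$, simply by distributing the sum~(\ref{eq:Q}) across $w^*_{ij}$ and $w^R_{ij}$.

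Now I would apply Lemma~\ref{thrm:SummationLemma} to bound $Q_{LS}(C) \le \overline{Q}(LS) - \sum_k \lambda_k p_k$, and use the trivial bound $Q_R(C) \le \overline{Q}(R)$ for the residual (which is a valid bound for \emph{any} subnetwork, regardless of how well the CPP can be solved on it). Adding the two bounds and using the additivity of $\overline{Q}$ yields
\begin{equation*}
Q_G(C) \le \overline{Q}(LS) - \sum_k \lambda_k p_k + \overline{Q}(R) = \overline{Q}(G) - \sum_k \lambda_k p_k.
\end{equation*}
Since this holds for \emph{every} partition $C$, the proven upper bound on $Q(G)$ is at most $\overline{Q}(G) - \sum_k \lambda_k p_k$, meaning the penalty of $G$ is at least $\sum_k \lambda_k p_k$, as required.

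The only place that requires care, and what I would single out as the main subtlety, is the sign-preservation check for the residual subnetwork $R$: the inequality $|w^*_{ij}| \le |w_{ij}|$ together with the equal-sign condition is precisely what guarantees that $w_{ij} - w^*_{ij}$ does not flip sign and that the trivial bounds add rather than cancel. Everything else is bookkeeping on top of Lemma~\ref{thrm:SummationLemma}.
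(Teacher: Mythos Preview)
Your proposal is correct and follows essentially the same approach as the paper's proof: decompose $G = LS + R$ with residual weights $w^R_{ij} = w_{ij} - w^*_{ij}$, apply the Summation Lemma to $LS$, the trivial bound to $R$, and add. You are in fact more explicit than the paper about verifying that $R$ is a legitimate subnetwork and that $\overline{Q}(G) = \overline{Q}(LS) + \overline{Q}(R)$ via sign preservation, which the paper leaves implicit.
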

\begin{proof}
Network $G$ can be represented as a sum $G = S^L + R$ of $S^L$ and some residual subnetwork $R$,
with edge weights $w^R_{ij} = w_{ij} - w^{*L}_{ij}$, where $w^{*L}_{ij}$ are equal to the edge weights $w^L_{ij}$ of $S^L$, if edge $\{i, j\}$ belongs to $S^L$, \replaced[id=A]{and}{or} zero otherwise.
Then,
$\overline{Q}(G) = \overline{Q}(S^L) + \overline{Q}(R)$,
and for any partition $C$,
$Q(G, C) = Q(S^L, C) + Q(R, C)$.
From the summation lemma, we have the following:
$Q(S^L, C) \leq \overline{Q}(S^L) - \sum_k{\lambda_k P_k}$.
So,
$Q(G, C) \leq Q(R, C) + \overline{Q}(S^L) - \sum_k{\lambda_k P_k} \leq
\overline{Q}(R) + \overline{Q}(S^L) - \sum_k{\lambda_k P_k} =
\overline{Q}(G) - \sum_k{\lambda_k P_k} = Q_{max}$,
and $G$ has penalty $P = \overline{Q}(G) - Q_{max} = \sum_k{\lambda_k P_k}$.
\end{proof}

This result provides a framework for constructing tight upper bounds by combining penalties of smaller subnetworks.
Having a set of penalizing subnetworks $\{S_1, S_2,..., S_K\}$ with their penalties $P_1, P_2, ..., P_K$, we can construct a linear programming problem to find the penalty of the whole network.
LP~problem can be formulated as follows:
\begin{equation}\label{eq:LPproblem}
\begin{split}
\text{maximize } & P=\sum_{k=1}^K{\lambda_k P_k},\\
\text{subject to } & \sum_{k=1}^K{\lambda_k \lvert w^{*k}_{ij}\rvert } \leq \lvert w_{ij} \rvert , \text{ for all } 1 \leq i < j \leq n,\\
& 0 \leq \lambda_k, \text{ for all } 1 \leq k \leq K.
\end{split}
\end{equation}
Constraints ensure that the linear combination of subnetworks \replaced[id=A]{remains a subnetwork, i.e., satisfies the condition $\lvert w^*_{ij} \rvert\leq \lvert w_{ij} \rvert$}{is permissible}.
\added[id=A]{Here comes the benefit of using \textit{reduced subnetworks}: having smaller edge weights $\lvert w^{*k}_{ij} \rvert$ while keeping the same penalty $P_k$ allows the possibility of finding larger coefficients~$\lambda_k$, and thus larger total penalty $P$.}
Note that this is not an integer problem and could be efficiently solved with modern optimisation software packages \added[id=A]{in polynomial time~\citep{Yin2015FasterLP,Cohen2021SolvingLP}}.
A large penalty $P$ found this way leads to a tight upper bound $Q_{max}=\overline{Q}-P$.
%We will describe two classes of penalizing subnetworks, which we call chains and stars.

\begin{figure}[h]
\centering
\includegraphics[width=\textwidth]{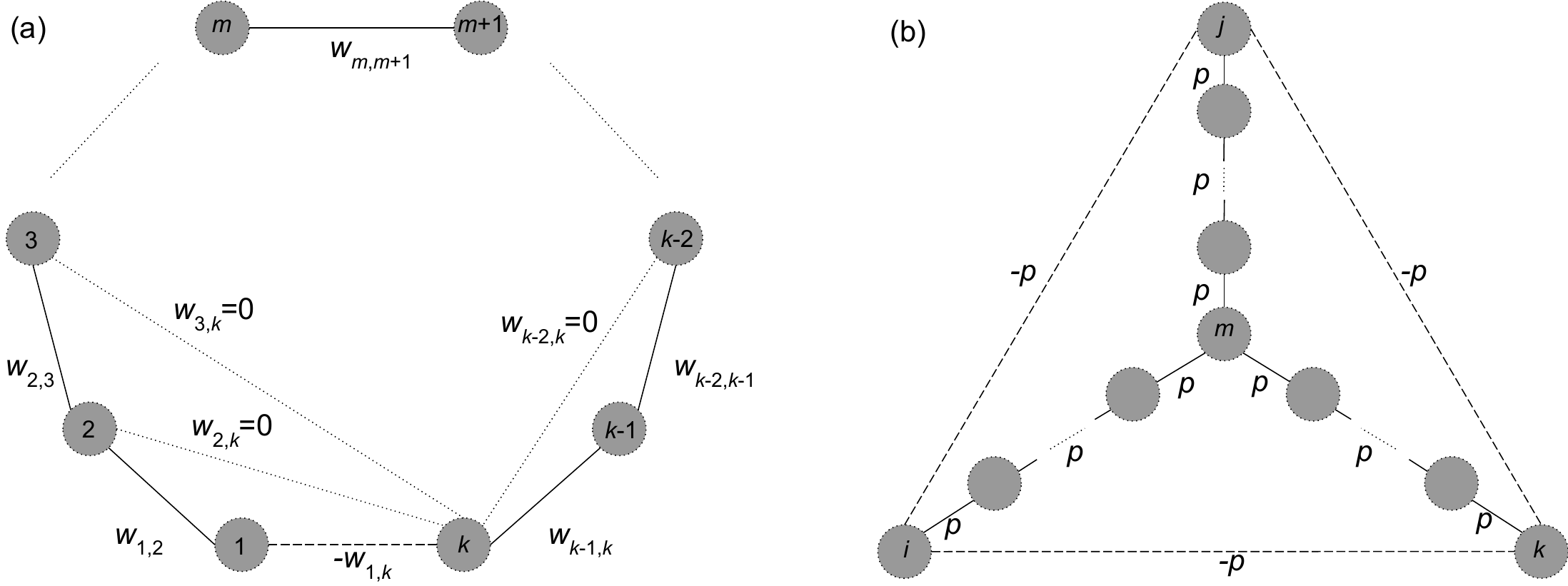}
\caption{\label{fig:ChainStar}Penalizing subnetworks: a) Chain; b) Star.}
\end{figure}

\subsection{Chains} 

Our method primarily focuses on a particular case of penalizing subnetworks that we call chains.
\begin{definition}\label{Def:Chain}
A \textit{chain} of length $k$ is a subnetwork consisting of $k$ nodes relabeled within a chain $1^*,\dots,k^*$, connected by positive edges $\{1^*, 2^*\}, \{2^*, 3^*\},\dots,\{(k-1)^*, k^*\}$ and a negative edge $\{1^*, k^*\}$.
When $k=3$, we call the chain a \textit{triangle}.
\end{definition}
Fig.~\ref{fig:ChainStar}a illustrates a chain in a general case, and in Fig.~\ref{fig:Method}b, we show triangles and a chain of length $4$.

It is always easy to find the exact solution of CPP for a chain.
Indeed, in the optimal partition, nodes $1^*$ and $k^*$ appear either in the same cluster or different ones.
In the first case, the negative edge $\{1^*, k^*\}$ is included in the total score, and in the second case, one of the positive edges must be excluded.
So, depending on which value is larger $\lvert w^*_{1^*,k^*} \rvert$ or $\min_{i=1,\dots,k-1}(w^*_{i^*,(i+1)^*})$, the optimal split of the chain will be into one or two clusters with the objective function value equal to $\sum_{i=1,\dots,k-1}{w^*_{i^*,(i+1)^*}} - \min(w^*_{1^*,2^*},\dots,w^*_{(k-1)^*,k^*},\lvert w^*_{k^*,1^*} \rvert)$,
where $\min(w^*_{1^*,2^*},\dots,w^*_{(k-1)^*,k^*},\lvert w^*_{k^*,1^*} \rvert)=P$ is that chain's penalty.
Then to construct a reduced chain, we can set the weight of each positive edge to $P$ and the weight of the negative edge to $-P$.
Repeating the same reasoning, one can easily see that this chain has the same penalty $P$.
In some sense, this is the best reduction possible because assigning a smaller absolute value to any weight in the original chain would lead to a smaller penalty.

Using chains alone, we can already calculate a non-trivial upper bound: find as many chains as possible, reduce them, construct an LP problem, and solve it using an appropriate method to obtain penalty $P$.
Then the upper bound is the difference $Q_{max} = \overline{Q} - P$.
By as many as possible, we mean as many as we can find and a solver can handle in a reasonable time (in our experiments, we used all penalizing chains with three and four nodes).
The following algorithm formalizes these steps.

\begin{algorithm2e}[H]
\caption{Calculate penalty by solving LP problem}\label{Alg:CalcPenaltyLP}
\SetKwInOut{Input}{input}
\SetKwInOut{Output}{output}
\Input{Graph $G$ represented as weight matrix $W=(w_{ij})$}
\Output{Penalty $P$ and a set of penalizing chains $Ch$}
\BlankLine
{find all chains of length 3 and 4}\tcp*[l]{Using four nested loops}
{construct LP problem}  \tcp*[l]{Using equations (\ref{eq:LPproblem})}
{solve LP problem, obtaining total penalty $P$ and weights of the chains}\;
{$Ch$ = chains with positive weights}\;
\Return{$P$, $Ch$\;}
\end{algorithm2e}

\begin{figure}[h]
\centering
\includegraphics[width=1\textwidth]{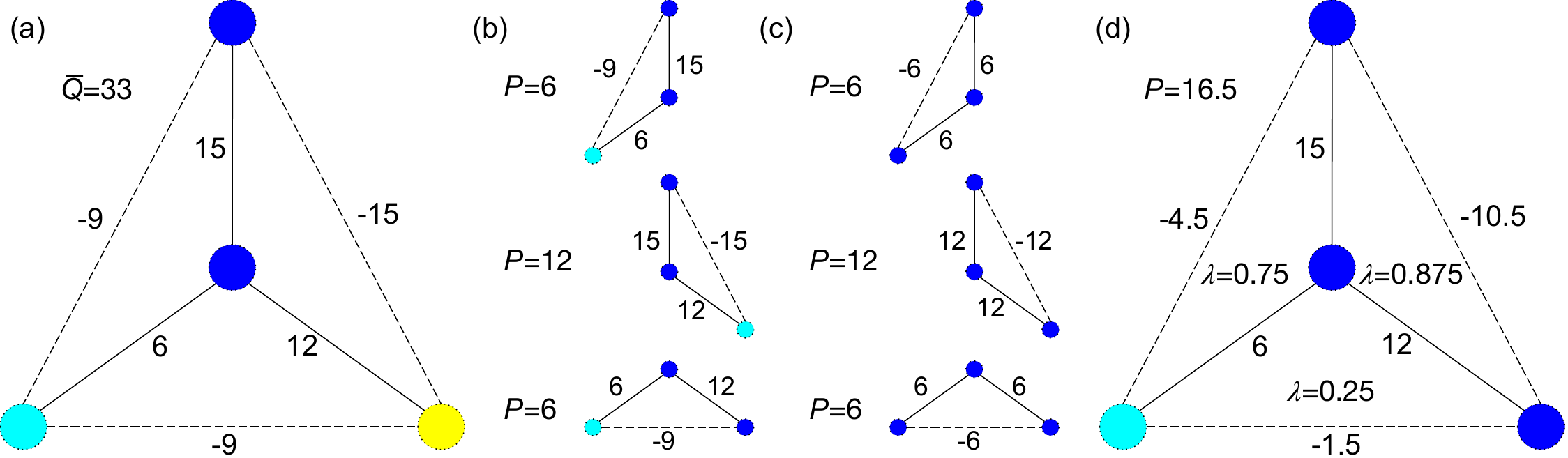}
\caption{\label{fig:Unresolvable}An example of a network for which the upper bound constructed using chains does not match the optimal objective function value. Original network (a) has only three chains (b). After reducing them (c) and constructing optimal permissible linear combination (d) by solving an LP problem, the best penalty is $16.5$, and the upper bound is $33-16.5=16.5$. However, it is easy to see (by considering all possibilities) that the best partition of the original network has a~score~of~$15$ and a~penalty~of~$18$.
\added[id=A]{Colors indicate one of the possible optimal partitions.}}
\end{figure}

We note here that while adding more chains, in general, helps to construct tighter upper bounds, it is possible that even after considering all chains, the upper bound will not become sharp, i.e., it will not reach the optimal value of the objective function.
We show an example of such a network in Fig.~\ref{fig:Unresolvable}.
Moreover, we proved that solution to relaxed problem~(\ref{eq:ILP}) (i.e., when constraints $x_{ij}\in\{0,1\}$ are replaced with $x_{ij}\in[0,1]$) always finds at least as tight upper bounds as considering only chains as penalizing subnetworks (see Appendix~\ref{sect:Appendix}).
Nevertheless, we will show that chains alone can already give a tight upper bound for many networks while the resulting LP problem is smaller and faster to solve.
Furthermore, the main advantage of our framework is that it allows us to use not only chains but any penalizing subnetworks.

\subsection{Stars} 

To show an example of penalizing subnetworks that can help to construct upper bounds tighter than those found by solving the relaxed problem~(\ref{eq:ILP}), we introduce one more class of penalizing subnetworks that we call \textit{stars}.
The intuition comes from the example in Fig.~\ref{fig:Unresolvable}.
A star is a network that has three nodes $i, j, k$ connected to each other by edges with weight $-p$, and a node $m$ connected to $i, j, k$ via simple non-overlapping paths consisting of edges with weight $p$, for some positive number $p$ (see Fig.~\ref{fig:ChainStar}b).
The solution of relaxed problem~(\ref{eq:ILP}) for a star gives a penalty of $1.5p$.
So it follows that the actual penalty is at least $2p$, and it is easy to see how it can be achieved.

\section{Branch and Bound\label{sect:b&b}}

Here we describe how to use the method proposed in the previous section inside a general branch-and-bound technique to solve CPP.
In each step of branch and bound, we select an edge of the network and \textit{fix} it, i.e., consider two possibilities:
1) This edge is \textit{included}, i.e., lays within some cluster, so its weight is included in the total sum of the objective function.
So, the two nodes that it connects belong to the same cluster in the final partition.
2) This edge is \textit{excluded}, i.e., it connects nodes belonging to different clusters in the final partition, and so its weight is not included in the objective function score.
In each of these two cases, we recalculate the estimate of what the objective function score could be, and if the upper bound is equal to or smaller than the value achieved by some already known feasible solution, then the case cannot lead to a better solution, so it is fathomed.
Then, for each case that is not fathomed, the same steps are repeated recursively.
This procedure creates a binary search tree that is being traversed depth-first.

There are a few things we need to consider at each step.
First, we must ensure that constraints imposed by edge inclusion or exclusion are not contradictory.
That means we need to propagate transitivity condition: if edges between $a$ and $b$, and $b$ and $c$ are included, then the edge between $a$ and $c$ must be included too; and if edge $\{a,b\}$ is included and edge $\{b,c\}$ is excluded, then edge $\{a,c\}$ must be excluded.
To ensure this, we use the following algorithm.

\begin{algorithm2e}[H]
\caption{Transitivity constraints propagation}\label{Alg:Transitivity}
\SetKwInOut{Input}{input}
\SetKwInOut{Output}{output}
\Input{Graph $G$, set of already fixed edges for which transitivity is already satisfied, and a newly fixed edge $\{a,b\}$}
\Output{Updated set of fixed edges in $G$, where transitivity condition is satisfied again}
\BlankLine
{define four initially empty sets of vertices $A,B,X,Y$}\;
\ForEach{vertex $u$ in graph $G$}{
  \lIf{edge $\{u,a\}$ is included}{add $u$ to $A$}
  \lElseIf{edge $\{u,a\}$ is excluded}{add $u$ to $X$}
  \lIf{edge $\{u,b\}$ is included}{add $u$ to $B$}
  \lElseIf{edge $\{u,b\}$ is excluded}{add $u$ to $Y$}
}
\eIf{edge $\{a,b\}$ is included}{
  include all edges between vertices of $A$ and $B$\;
  exclude all edges between vertices of $A$ and $Y$\;
  exclude all edges between vertices of $B$ and $X$\;
} {
  exclude all edges between vertices of $A$ and $B$\;
}
\Return{updated set of fixed edges\;}
\end{algorithm2e}

The correctness of this algorithm follows from the observation that when the transitivity condition is satisfied, nodes connected by included edges form cliques, and every node is connected to all nodes in such a clique via the same type of edges (included, excluded, or non-fixed).
Fixing edge $\{a, b\}$ may break this property, and Algorithm~\ref{Alg:Transitivity} restores it.
The time complexity of this algorithm is $O(n^2)$, where $n=\lvert V \rvert$ is the number of nodes in $G$ because the number of all possible edges between sets $A, B, X, Y$ is not greater than the total number of edges, which is $n(n-1)/2$.
However, in practice, in many cases, we do not need to fix many edges, so sets $A, B, X, Y$ are small enough so that their sizes could be considered constant, as well as the number of edges between them, then the complexity of the algorithm is dominated by the first loop over all nodes, and thus this algorithm runs in linear time $O(n)$.

A second consideration is that we want to update our upper bound estimate after each edge fixation.
We do so by noting that if we include a negative or exclude a positive edge $\{i, j\}$, then $\lvert w_{ij} \rvert$ should be added to the network's penalty since $Q\leq\overline{Q}-\lvert w_{ij} \rvert$ in this case.
However, any fixation of edge $\{i, j\}$ changes the penalties of some subnetworks in which it is present, so the penalty of each affected subnetwork needs to be recalculated.
Fortunately, it is easy to do for chains and stars.
If we include a negative edge or exclude a positive one, then adding $\lvert w_{ij} \rvert$ to the network's total penalty entirely accounts for any penalty incurred by this edge in any chain containing it, so we should stop considering such chains.
We do the same for stars, but the reason is a bit less apparent.
We stop considering stars containing edge $\{i, j\}$ because their penalties are accounted for by weight $\lvert w_{ij} \rvert$ added to the total penalty and by penalties of chains that do not go through $\{i, j\}$.
In contrast, if we exclude a negative or include a positive edge $\{i, j\}$, the total penalty is not affected directly.
To account for such fixation, we need to exclude this edge from the constraints in LP formulation~(\ref{eq:LPproblem}).
%, or equivalently, to update every subnetwork where this edge was present, assigning the edge's weight $w^*_{ij}$ to $0$ while setting subnetwork's penalty to the lowest absolute value among weights of its non-fixed edges.

In our experiment, we implemented the branch-and-bound algorithm using only chains to calculate upper bounds.
Separately, we implemented a version where we included stars to estimate the initial upper bound.
In the rest of this section, we will present algorithms for chains only. They could be easily generalized to use stars too.
However, in our experiments, the benefit of tighter upper bounds obtained from solving the LP problem with stars was offset by the larger LP problem that was slower to solve.

\begin{algorithm2e}[H]
\caption{Calculate penalty using heuristic}\label{Alg:CalcPenaltyHeuristic}
\SetKwInOut{Input}{input}
\SetKwInOut{Output}{output}
\SetKwFunction{RandomShuffle}{RandomShuffle}
\SetKwFunction{FindShortestPositivePath}{FindShortestPositivePath}
\SetKwRepeat{Do}{do}{while}
\SetKw{And}{and}
\SetKw{Or}{or}
\SetKw{True}{true}
\SetKw{False}{false}
\SetKw{Break}{break}
\Input{Graph $G$, set of fixed edges $F$, previous set of chains $Ch$}
\Output{Penalty $P$ and a set of penalizing chains $Ch$}
\BlankLine
%\linespread{1.5}\selectfont
{define penalty $P=0$}\;
{define a new set of chains $Ch_{new}=\varnothing$\;}
\ForEach{chain $c$ with penalty $p$ in $Ch$}{
    {define boolean flag $keep\_chain$ = \True}\;
    \ForEach{edge $(u,v)$ in $c$}{
      \lIf{$(u,v)$ is included \And $w^*_{uv}<0$ \Or
             $(u,v)$ is excluded \And $w^*_{uv}>0$}{
          $keep\_chain$ = \False %\tcp*[l]{Chain $c$ cannot be reused}
      }
    }
  \If{$keep\_chain$}{
    $Ch_{new} = Ch_{new} \cup \{c\}$ \tcp*[l]{Add $c$ to new set of chains}
    $P = P + p$\;
    \ForEach{not fixed edge $(u,v)$ in $c$}{
      \lIf{$w_{uv}>0$}{ $w_{uv} = w_{uv} - p$ }
      \lElse{ $w_{uv} = w_{uv} + p$ }
    }
  }
}
\ForEach{$len$ = 2 to Infinity}{
  \lIf{there are no negative edges in positive connected components in $G$}{\Break}
  find all negative edges $E_{neg}$ in $G$\;
  \ForEach{edge $(u,v)$ in \RandomShuffle{$E_{neg}$}}{
    \While{$w_{uv}<0$}{
      $path$ = \FindShortestPositivePath{$G$, $u$, $v$} \tcp*[l]{simple BFS}
      \lIf{path length $>len$}{\Break}
      {construct chain $c$ from $path$ and edge $(u,v)$}\;
      {calculate penalty $p$ of chain $c$}\;
      \lForEach{not fixed edge $(i,j)$ in $path$}{
        $w_{ij} = w_{ij} - p$
      }
      $w_{uv} = w_{uv} + p$\;
      $P = P + p$\;
      $Ch_{new} = Ch_{new} \cup \{c\}$ \tcp*[l]{Add $c$ to new set of chains}
    }
  }
}
\Return{$P$, $Ch_{new}$\;}
\end{algorithm2e}

Moreover, it appeared that instead of solving the LP problem at each step, it is often expedient to use a much faster greedy technique that produces less tight upper bounds.
The idea is to use a good set of chains with their weights $\lambda$ already found at previous steps, and instead of constructing and solving the LP problem for the whole network, construct a residual subnetwork $R = G - S^L$ and find new chains in it using a simple heuristic: find a (random) chain and subtract it from $R$ (with weight $\lambda=1$), then find another (random) chain and subtract it, repeat this process, until there are no more chains. This process is formalized in algorithm~\ref{Alg:CalcPenaltyHeuristic}.

This method works quickly, but it accumulates inefficiencies.
To deal with them, after considering some number of levels (e.g. four) in the branch-and-bound search tree, we still solve the complete LP problem to update chains and their weights.
For that purpose, we can use a slightly adjusted algorithm~\ref{Alg:CalcPenaltyLP} that takes into account fixed edges.

Now we are ready to present the main workhorse of branch and bound: a recursive function that explores each node of the tree, i.e., tries to include and exclude an edge and calls itself recursively to explore the search tree further.

\begin{algorithm2e}[H]
\caption{Recursive branching}\label{Alg:DFS}
\SetKwInOut{Input}{input}
\SetKwInOut{Output}{output}
\SetKwFunction{UpdateFixedEdges}{UpdateFixedEdges}
\SetKwFunction{CalcPenaltyHeuristic}{CalcPenaltyHeuristic}
\SetKwFunction{CalcPenaltyLP}{CalcPenaltyLP}
\SetKwFunction{DFS}{RecursiveBranching}
\SetKw{And}{and}
\SetKw{Or}{or}
\Input{Graph $G$, list of positive edges $L$, current edge index $e$, set of fixed edges $F$, the best partition $C$ found so far and its score $Q_{min}$, current recursion depth $d$}
\Output{New the best feasible partition and its quality score}
\BlankLine
\lWhile{$L[e]$ is fixed}{$e = e + 1$}
\If(\tcp*[h]{we have fixed all positive edges}){$e > |L|$}{
  exclude all negative edges that are not fixed yet\;
  \Return{current partition, current $Q$\;}
}
\ForEach{\{include edge $L[e]$, exclude edge $L[e]$\}}{
  $F'$ = \UpdateFixedEdges{$G$,$F$, $L[e]$}  \tcp*[l]{Algorithm~\ref{Alg:Transitivity}}
  {define penalty $P_0=0$}\;
  \ForEach{edge $(u,v)$ in set $F'$}{
    \lIf{$(u,v)$ is included \And $w_{uv}<0$ \Or
        $(u,v)$ is excluded \And $w_{uv}>0$}{
      $P_0 = P_0 + \lvert w_{uv} \rvert$
    }
  }
  \tcp{every few steps we try to obtain a higher penalty}
  \eIf{$d \mod 4 == 0$}{
    $P$, $Ch$ = \CalcPenaltyLP{$G$, $F'$} \tcp*[l]{Adjusted algorithm~\ref{Alg:CalcPenaltyLP}}
  }{
    $P$, $Ch$ = \CalcPenaltyHeuristic{$G$, $F'$, $Ch$} \tcp*[l]{Algorithm~\ref{Alg:CalcPenaltyHeuristic}}
  }
  \If(\tcp*[h]{Recursive call}){$\overline{Q} - P_0 - P > Q_{min}$}{
    $C$, $Q_{min}$ = \DFS{$G$, $L$, $e+1$, $F'$, $C$, $Q_{min}$, $d+1$}\;
  }
}
\Return{$C$, $Q_{min}$\;}
\end{algorithm2e}

Two more notes before we can finally formulate the main procedure:
1)~The order in which we consider edges influences performance quite a lot.
However, calculating penalty change after fixing each edge on every step is too computationally expensive.
So, we used edge weights as an approximation.
The reasoning here is that excluding a heavy positive edge would cause a higher loss in partition score.
%We also try to estimate how fixing them will change the total penalty, but recalculating this on %each step is too computationally intense, so we just check at the very beginning how their %exclusion changes the penalty.
%We tried different variations and found that, in most cases, a combination of pure weight and our %estimate of penalty influence works better than other options.
%
%\input{algs_e2/alg-sort_edges}
2)~For some most simple cases, even the heuristic of algorithm~\ref{Alg:CalcPenaltyHeuristic} can already find an upper bound that matches a feasible solution proving its optimality.
So to avoid spending time on solving the LP problem, we try a few (e.g. three) times to calculate the initial upper bound using the heuristic.

The following algorithm describes the main procedure of the branch-and-bound method that calls the functions presented above to find a CPP solution.

\begin{algorithm2e}[H]
\caption{Branch and bound}\label{Alg:BnB}
\SetKwInOut{Input}{input}
\SetKwInOut{Output}{output}
\SetKwFunction{DFS}{RecursiveBranching}
\SetKwFunction{SortEdges}{SortEdges}
\SetKwFunction{CalcPenaltyHeuristic}{CalcPenaltyHeuristic}
\SetKwFunction{CalcPeanaltyLP}{CalcPeanaltyLP}
\SetKwFunction{GetFeasibleSolution}{GetFeasibleSolution}
\SetKwFor{repeat}{repeat}{times}
\Input{Graph $G$}
\Output{Optimal partition $C$ and its quality score $Q_{opt}$}
\BlankLine
%\tcp{run heuristic to get initial partition $C$ and its penalty $Q_{min}$}
{$C$, $Q_{min}$ = \GetFeasibleSolution{G}} \tcp*[l]{Run heuristic}
\repeat{3}{
    {$P$, $Ch$ = \CalcPenaltyHeuristic{G,$\varnothing$,$\varnothing$}} \tcp*[l]{Algorithm~\ref{Alg:CalcPenaltyHeuristic}}
    \lIf{$\overline{Q} - P == Q_{min}$}{
        \Return{$C$, $Q_{min}$}
    }
}
{$P$, $Ch$ = \CalcPeanaltyLP{G}} \tcp*[l]{Algorithm~\ref{Alg:CalcPenaltyLP}}
\lIf{$\overline{Q} - P == Q_{min}$}{
  \Return{$C$, $Q_{min}$}
}
%{$L$ = \SortEdges{$G$, $Ch$, $P$}} \tcp*[l]{Algorithm~\ref{Alg:Sorting}}
{$L$ = positive edges of $G$ sorted in decreasing order of weight}\;
{$C$, $Q_{opt}$ = \DFS{$G$,$L$,1,$\varnothing$,$C$,$Q_{min}$,1}} \tcp*[l]{Algorithm~\ref{Alg:DFS}}
\Return{$C$, $Q_{opt}$\;}
\end{algorithm2e}

Since both our method and the method by~\cite{Jaehn2013CPP} implement the standard branch-and-bound technique, they both have a similar recursive structure with the same steps.
However, each step of the two methods is implemented differently:
1)~We use Combo~\citep{Combo} to obtain lower bounds, while Jaehn and Pesch use a heuristic by~\cite{Dorndorf1994}.
Our experiments show that Combo is more efficient\added[id=A]{, agreeing with recent results~\citep{aref2023heuristic}.}
2)~We use a more efficient algorithm~\ref{Alg:Transitivity} for constraints propagation.
3)~The order in which we consider edges is different.
4)~And, most importantly, we use methods introduced above to calculate much tighter upper bounds using penalizing subnetworks.

\section{Computational Experiment\label{sect:Results}}

We considered both proprietary solvers and open-source solutions to solve the LP problem that arises when calculating the upper bound estimate.
Based on the results of the comparison of some of the most popular open-source solvers~\citep{SolversComparison}, we picked COIN\_OR CLP~\citep{CLP} as an open-source candidate for our experiments.
Then, after it showed results similar to and often even better than proprietary solutions, we stayed with it as our linear programming solver.
Another argument in favour of an open-source solution was that we wanted our results to be freely available to everyone.
Our code and the datasets generated and analysed in this section are available on GitHub\footnote{\url{https://github.com/Alexander-Belyi/best-partition}}.
To find an initial solution of CPP, we used the algorithm Combo~\citep{Combo}, whose source code is also freely available.
All programs were implemented in C++, compiled using Clang 13.1.6, and ran on a laptop with a 3.2 GHz CPU and 16 GB RAM.

Among the recent works, there are two methods for solving CPP exactly that show the best results.
Our method extends and improves upon the one by \cite{Jaehn2013CPP}.
The other one is by \cite{Miyauchi2018exact}, whose main idea was to provide a way of significantly reducing the number of constraints in problem~(\ref{eq:ILP}) for networks with many zero-weight edges.
So they tested their algorithm only on networks with many zero-weight edges, which is the hard case for our method because a chain cannot be penalizing if it has non-fixed edges with weight zero.
Thus, the comparison with their results would be unfair.
Therefore we mostly adopted the testing strategy of \cite{Jaehn2013CPP} and compared results with their method, which is much more similar in spirit to ours.

\added[id=A]{We note that \cite{Jaehn2013CPP} did not make their implementation available, and our attempts to re-implement their method did not show any improvements compared to already reported results, sometimes falling behind significantly. Therefore, below, we compare our results with the results from their original paper. Although we use a modern laptop, running our algorithm on a computer from 2011 with similar characteristics to the one used by \cite{Jaehn2013CPP} decreases the performance only by a factor from 1.5 to 3, which is expected for a simple single-threaded program without heavy memory usage, like our algorithm. So, we believe the difference in laptop configuration cannot lead to misjudgments in our comparison.}

Also, following \cite{Jaehn2013CPP}, we report times for solving ILP problem~\ref{eq:ILP} using CPLEX Optimization Studio 20.1.
We use the version of problem~\ref{eq:ILP} without redundant constraints as proposed by~\cite{koshimura2022concise}.
\added[id=A]{We tried incorporating custom propagation techniques into CPLEX but could not achieve performance gains compared to the default settings.}
Finally, we note that there exists a benchmark for evaluating heuristic approaches adopted, for example, by \cite{hu2020two-model} and \cite{lu2021hybrid}, but it consists of instances too large to be solved exactly, and therefore we did not use it.

We tested our approach on both real-world and artificial networks.
The real-world networks were collected from previous studies found in the literature, and artificial networks are random graphs generated according to specified rules.
\cite{Jaehn2013CPP} considered two sets of real-world networks.
The first set studied by \cite{Grotschel1989cutting} was obtained by reducing an object clustering problem to CPP.
Some of the networks were published in the article's appendix, but some were only referenced, so we could not find Companies network.
For the network UNO, we got the same results as \cite{Grotschel1989cutting}, and they are slightly different from the results of~\cite{Jaehn2013CPP}, probably because of the typo in the data.
\cite{Jaehn2013CPP} mentioned this issue.
We present results for these networks in Table~\ref{tab:cpp_real_world_1}.
In all the tables that follow:
in column Nodes, we show the number of nodes considered by the branch-and-bound technique;
$t$ indicates execution time in seconds;
$n$ is the network's size;
$\overline{Q}$ is a trivial upper bound estimate~(\ref{eq:Q_trivial_max});
$Q_{min}$ represents the initial value obtained by a heuristic (Combo in our case);
$Q_{max}$ is the upper bound obtained on the first step by using penalizing chains in our algorithm and using triangles in the method of \cite{Jaehn2013CPP};
$Q_{opt}$ is the optimal solution;
asterisk~($^*$) indicates the results of the algorithm proposed here;
$t^{CPLEX}$ is execution time for CPLEX;
results of \cite{Jaehn2013CPP} \added[id=A]{($\overline{Q}$, $Q_{min}$, $Q_{max}$, Nodes, $t$)} are taken from their article;
better values are shown in bold.

\footnotesize
\begin{longtable}{| l | r | r | r | r | r | r | r | r | r | r |}
\hline
Network & n & $\overline{Q}$ & $Q_{max}^{*}$ & $Q_{max}$ & $Q_{opt}$ & Nodes$^{*}$ & Nodes & $t^*$\,(s) & $t$\,(s) & $t^{\tiny{CPLEX}}$\,(s) \\
\hline
\endfirsthead
\hline
\endhead
\hline
\endfoot

Wild cats &  30 &  1\,400 & \textbf{ 1\,304} &      1\,328  &  1\,304 & \textbf{ 0} &      92  & \textbf{0.00} &         0.03  &  0.16 \\
Cars      &  33 &  1\,748 & \textbf{ 1\,501} &      1\,589  &  1\,501 & \textbf{ 0} &     425  &         0.15  & \textbf{0.08} &  0.22 \\
Workers   &  34 &  1  233 & \textbf{    964} &      1\,020  &     964 & \textbf{29} &  2\,028  & \textbf{0.23} &         0.32  &  0.24 \\
Cetacea   &  36 &     998 & \textbf{    967} &         969  &     967 & \textbf{ 0} &       1  & \textbf{0.00} &         0.01  &  0.14 \\
Micro     &  40 &  1\,362 & \textbf{ 1\,034} &      1\,116  &  1\,034 & \textbf{ 0} & 21\,101  & \textbf{0.12} &         3.16  &  0.32 \\
UNO       &  54 &     918 & \textit{    798} & \textit{785} &     798 & \textbf{ 0} &      61  &         0.24  & \textbf{0.04} &  0.73 \\
UNO 1a    & 158 & 12\,322 &         12\,197  &     12\,197  & 12\,197 &          0  &       0  & \textbf{0.05} &         0.08  & 41.71 \\
UNO 1b    & 139 & 11\,859 &         11\,775  &     11\,775  & 11\,775 &          0  &       0  & \textbf{0.03} &         0.06  & 29.46 \\
UNO 2a    & 158 & 73\,178 & \textbf{72\,820} &     72\,874  & 72\,820 & \textbf{ 0} &     133  & \textbf{0.06} &         0.20  & 35.81 \\
UNO 2b    & 145 & 72\,111 & \textbf{71\,818} &     71\,840  & 71\,818 & \textbf{ 0} &     141  & \textbf{0.03} &         0.16  & 27.74 \\

\hline
\caption{Results of evaluation on the first set of real-world networks compiled by \cite{Grotschel1989cutting} compared with results of \cite{Jaehn2013CPP}.
\label{tab:cpp_real_world_1}}
\end{longtable}
\normalsize

It could be seen that all instances were solved by our method within a second.
Combo had already found the optimal solution in all cases, and our method was applied only to prove its optimality.
There are a few cases where the method of \cite{Jaehn2013CPP} was faster due to the quick heuristic they use to construct upper bounds, while our approach had to solve the LP problem.
However, we would notice that we did not have to use branch and bound for any network except Workers because the constructed upper bound was already equal to the lower bound found by Combo.

The second set of real-world networks arises from a part-to-machine assignment problem, which is often encountered in group technology, and was studied by~\cite{Oosten2001}.
Unfortunately, they did not publish their networks and only provided citations to sources.
Nevertheless, we obtained five out of the seven networks they considered.
These networks are particularly hard to solve because they are bipartite, which means that every triple of nodes has an edge with zero weight, so there are no triangles.
Just as~\cite{Oosten2001} and~\cite{Jaehn2013CPP}, we quickly solved three easy problems, but unlike them, we also solved MCC and BOC problems, although, CPLEX showed even better time.
Summary statistics are present in Table~\ref{tab:cpp_real_world_2}, but neither~\cite{Oosten2001} nor~\cite{Jaehn2013CPP} reported their execution time.

\footnotesize
\begin{longtable}{| l | r | r | r | r | r | r | r |}
\hline
Network & n & $\overline{Q}$ & $Q_{max}^{*}$ & $Q_{opt}$ & Nodes$^{*}$ & $t^{*}$\,(s) & $t^{CPLEX}$\,(s) \\
\hline
\endfirsthead
\hline
\endhead
\hline
\endfoot

KKV &  24 &   32 & 23.0  & 23 &       24 & \textbf{0.02} &   0.08 \\
SUL &  31 &   71 & 48.0  & 46 &        8 & \textbf{0.05} &   0.87 \\
SEI &  33 &   77 & 55.7  & 54 &       34 & \textbf{0.11} &   0.37 \\
MCC &  40 &   85 & 56.7  & 43 &  16\,095 &        95.35  & \textbf{22.03} \\
BOC &  59 &  126 & 84.0  & 67 & 106\,620 &    1\,494.21  & \textbf{156.1}  \\
\hline
\caption{Results of evaluation on the second set of real-world networks compiled by \cite{Oosten2001}. \label{tab:cpp_real_world_2}}
\end{longtable}
\normalsize

To generate random graphs, we repeated the procedures described by~\cite{Jaehn2013CPP}.
Similarly, we created four sets of synthetic networks.
The first set consists of graphs with $n$ vertices where $n$ ranges from $10$ to \replaced[id=A]{$23$}{$20$}\added[id=A]{. In the original paper by~\cite{Jaehn2013CPP}, authors used only networks of sizes up to $20$ nodes, but we extended all four datasets with networks of $21-23$ nodes for better comparison with CPLEX}.
In the first dataset, edge weights were selected uniformly from the range $[-q, q]$.
For every $n$ and every $q$ from a set $\{1, 2, 3, 5, 10, 50, 100\}$ we generated five random graphs, resulting in $35$ graphs for each $n$ or \replaced[id=A]{$490$}{$385$} graphs in~total.
Results for this set are shown in Table~\ref{tab:cpp_rand_1}.
Each value corresponds to a sum over $35$ instances.
Because our networks are different from those generated by~\cite{Jaehn2013CPP},
first, after each experiment, we divided $\overline{Q}$, $Q_{min}$, and $Q_{max}$ by $Q_{opt}$ and operated with relative numbers instead of absolute values of the objective function.
Second, we ran every experiment ten times with different random instances and reported the mean and the unbiased standard deviation estimate (as $mean \pm std.$).

\footnotesize
\begin{longtable}{| r | c | c | c | c | c | c |}
\hline
n & $\overline{Q}^{*}$ & $\overline{Q}$ & $Q_{min}^{*}$ & $Q_{min}$ & $Q_{max}^{*}$ & $Q_{max}$\\
\hline
\endfirsthead
\hline
\endhead
\hline
\endfoot
10  & 1.749 $\pm$ 0.048 & 1.764 & \textbf{0.998} $\pm$ 0.003 &         0.994  & \textbf{1.014} $\pm$ 0.005 & 1.226 \\
11  & 1.807 $\pm$ 0.046 & 1.831 & \textbf{0.995} $\pm$ 0.005 &         0.988  & \textbf{1.018} $\pm$ 0.007 & 1.272 \\
12  & 1.844 $\pm$ 0.053 & 1.932 & \textbf{0.998} $\pm$ 0.002 &         0.993  & \textbf{1.020} $\pm$ 0.007 & 1.305 \\
13  & 1.934 $\pm$ 0.049 & 1.867 & \textbf{0.997} $\pm$ 0.003 &         0.986  & \textbf{1.032} $\pm$ 0.010 & 1.287 \\
14  & 2.015 $\pm$ 0.044 & 1.971 & \textbf{0.996} $\pm$ 0.003 &         0.983  & \textbf{1.049} $\pm$ 0.015 & 1.355 \\
15  & 2.046 $\pm$ 0.029 & 2.071 & \textbf{0.997} $\pm$ 0.002 &         0.996  & \textbf{1.056} $\pm$ 0.013 & 1.367 \\
16  & 2.088 $\pm$ 0.036 & 2.043 &         0.996  $\pm$ 0.002 & \textbf{0.999} & \textbf{1.068} $\pm$ 0.018 & 1.341 \\
17  & 2.152 $\pm$ 0.053 & 2.189 &         0.995  $\pm$ 0.003 & \textbf{0.997} & \textbf{1.090} $\pm$ 0.020 & 1.419 \\
18  & 2.205 $\pm$ 0.045 & 2.230 & \textbf{0.994} $\pm$ 0.004 &         0.993  & \textbf{1.109} $\pm$ 0.022 & 1.433 \\
19  & 2.236 $\pm$ 0.036 & 2.236 & \textbf{0.993} $\pm$ 0.006 &         0.994  & \textbf{1.123} $\pm$ 0.017 & 1.439 \\
20  & 2.313 $\pm$ 0.037 & 2.251 & \textbf{0.993} $\pm$ 0.005 &         0.988  & \textbf{1.159} $\pm$ 0.018 & 1.440 \\
21  & 2.327 $\pm$ 0.048 &       &         0.994  $\pm$ 0.003 &                &         1.165  $\pm$ 0.023 &       \\
22  & 2.399 $\pm$ 0.041 &       &         0.993  $\pm$ 0.002 &                &         1.200  $\pm$ 0.021 &       \\
23  & 2.417 $\pm$ 0.047 &       &         0.993  $\pm$ 0.004 &                &         1.209  $\pm$ 0.023 &       \\
\hline
n & Nodes$^{*}$ & Nodes & $t^*$\,(s) & $t$\,(s) & \multicolumn{2}{c|}{$t^{CPLEX}$\,(s)} \\
\hline
10  & \textbf{     133.8} $\pm$ 48.30      &        1\,205 & \textbf{ 0.03} $\pm$  0.01  &     0.05 & \multicolumn{2}{c|}{ 0.19 $\pm$ 0.01}\\
11  & \textbf{     251.0} $\pm$ 72.97      &        4\,236 & \textbf{ 0.05} $\pm$  0.01  &     0.13 & \multicolumn{2}{c|}{ 0.28 $\pm$ 0.01}\\
12  & \textbf{     406.9} $\pm$ 74.07      &        7\,577 & \textbf{ 0.09} $\pm$  0.01  &     0.18 & \multicolumn{2}{c|}{ 0.41 $\pm$ 0.03}\\
13  & \textbf{     655.6} $\pm$ 170.80     &       20\,005 & \textbf{ 0.16} $\pm$  0.03  &     0.47 & \multicolumn{2}{c|}{ 0.71 $\pm$ 0.11}\\
14  & \textbf{  1\,506.3} $\pm$ 268.70     &       50\,101 & \textbf{ 0.35} $\pm$  0.04  &     1.28 & \multicolumn{2}{c|}{ 1.52 $\pm$ 0.35}\\
15  & \textbf{  2\,184.4} $\pm$ 411.05     &      185\,336 & \textbf{ 0.60} $\pm$  0.10  &     5.26 & \multicolumn{2}{c|}{ 2.49 $\pm$ 0.51}\\
16  & \textbf{  4\,992.4} $\pm$ 1\,089.81  &      499\,569 & \textbf{ 1.32} $\pm$  0.29  &    16.3  & \multicolumn{2}{c|}{ 4.33 $\pm$ 0.77}\\
17  & \textbf{  9\,809.6} $\pm$ 1\,122.03  &   4\,186\,427 & \textbf{ 2.96} $\pm$  0.30  &   155    & \multicolumn{2}{c|}{ 7.88 $\pm$ 1.33}\\
18  & \textbf{ 20\,612.6} $\pm$ 5\,574.21  &   9\,811\,533 & \textbf{ 6.86} $\pm$  1.73  &   466    & \multicolumn{2}{c|}{13.86 $\pm$ 2.22}\\
19  & \textbf{ 46\,469.4} $\pm$ 9\,524.13  &  37\,572\,347 & \textbf{16.46} $\pm$  3.23  &  1849    & \multicolumn{2}{c|}{20.25 $\pm$ 2.37}\\
20  & \textbf{106\,454.3} $\pm$ 30\,381.30 & 185\,321\,420 &         41.84  $\pm$  10.38 & 11299    & \multicolumn{2}{c|}{\textbf{30.21} $\pm$ 2.86}\\
21  &         225\,597.7  $\pm$ 88\,749.61 &               &        104.13  $\pm$  34.47 &          & \multicolumn{2}{c|}{\textbf{43.73} $\pm$ 4.71}\\
22  &         549\,486.9  $\pm$168\,302.47 &               &        271.45  $\pm$  68.50 &          & \multicolumn{2}{c|}{\textbf{65.95} $\pm$ 4.36}\\
23  &      1\,142\,782.7  $\pm$276\,823.52 &               &        629.74  $\pm$ 115.34 &          & \multicolumn{2}{c|}{\textbf{90.15} $\pm$ 5.97}\\
\hline
\caption{Results of evaluation on the first set of random graphs compared with results of \cite{Jaehn2013CPP}.
\label{tab:cpp_rand_1}}
\end{longtable}
\normalsize

As seen from the table, for this set of random graphs, our approach significantly outperformed~\cite{Jaehn2013CPP} on networks of all sizes.
While averages of our trivial estimates $\overline{Q}^{*}$ are pretty close to $\overline{Q}$, indicating that generated random instances were similar to those used by~\cite{Jaehn2013CPP}, our estimates of upper bounds were always more than $20\%$ closer to the optimal solution.
For the largest instances with $20$ nodes, our approach considered about $1,800$ times fewer nodes and completed more than $250$ times faster.
\added[id=A]{On the other hand, we can see that for larger instances CPLEX starts to perform even faster than the proposed method.}

Graphs in the second set were generated using a procedure that is supposed to resemble the process of creating similarity networks of~\cite{Grotschel1989cutting}.
First, for every graph with $n$ vertices, we fixed a parameter~$p$.
Then, for each vertex, we created a binary vector of length $p$, picking $0$ or $1$ with an equal probability of $0.5$.
Finally, the weight of the edge between vertices $i$ and $j$ was set to $p$ minus doubled the number of positions where vectors of $i$ and $j$ differ.
For every $n$ from $10$ to \replaced[id=A]{$30$}{$24$} and $p$ from the set $\{1, 2, 3, 5, 10, 50, 100\}$ we generated $5$ instances of random graphs.
We show results for this set in Table~\ref{tab:cpp_rand_2}.
As previously, in every experiment, results are summed up over $35$ instances, and we report the mean and standard deviation calculated over ten experiment runs.

\footnotesize
\begin{longtable}{| r | c | c | c | c | c | c |}
\hline
n & $\overline{Q}^{*}$ & $\overline{Q}$ & $Q_{min}^{*}$ & $Q_{min}$ & $Q_{max}^{*}$ & $Q_{max}$ \\
\hline
\endfirsthead
\hline
\endhead
\hline
\endfoot

10 &  1.397 $\pm$ 0.030 &  1.387 & \textbf{0.999} $\pm$ 0.001 &         0.985  & \textbf{1.005} $\pm$ 0.002 &	1.122 \\
11 &  1.415 $\pm$ 0.021 &  1.476 & \textbf{0.999} $\pm$ 0.001 &         0.995  & \textbf{1.003} $\pm$ 0.001 &	1.177 \\
12 &  1.466 $\pm$ 0.035 &  1.421 &         0.999  $\pm$ 0.001 & \textbf{1.000} & \textbf{1.007} $\pm$ 0.003 &	1.149 \\
13 &  1.479 $\pm$ 0.020 &  1.437 & \textbf{0.998} $\pm$ 0.001 &         0.997  & \textbf{1.006} $\pm$ 0.004 &	1.144 \\
14 &  1.506 $\pm$ 0.022 &  1.516 & \textbf{0.998} $\pm$ 0.002 &         0.991  & \textbf{1.008} $\pm$ 0.002 &	1.173 \\
15 &  1.526 $\pm$ 0.018 &  1.546 & \textbf{0.998} $\pm$ 0.001 &         0.995  & \textbf{1.010} $\pm$ 0.002 &	1.178 \\
16 &  1.554 $\pm$ 0.022 &  1.541 & \textbf{0.998} $\pm$ 0.002 &         0.992  & \textbf{1.010} $\pm$ 0.003 &	1.181 \\
17 &  1.568 $\pm$ 0.030 &  1.569 & \textbf{0.997} $\pm$ 0.003 &         0.988  & \textbf{1.012} $\pm$ 0.004 &	1.188 \\
18 &  1.594 $\pm$ 0.024 &  1.575 & \textbf{0.997} $\pm$ 0.002 &         0.992  & \textbf{1.014} $\pm$ 0.003 &	1.195 \\
19 &  1.619 $\pm$ 0.017 &  1.592 & \textbf{0.998} $\pm$ 0.001 &         0.987  & \textbf{1.019} $\pm$ 0.006 &	1.214 \\
20 &  1.631 $\pm$ 0.024 &  1.630 & \textbf{0.998} $\pm$ 0.002 &         0.986  & \textbf{1.017} $\pm$ 0.006 &	1.228 \\
21 &  1.644 $\pm$ 0.019 &  1.631 & \textbf{0.996} $\pm$ 0.002 &         0.983  & \textbf{1.019} $\pm$ 0.004 &	1.229 \\
22 &  1.665 $\pm$ 0.029 &  1.639 & \textbf{0.996} $\pm$ 0.003 &         0.990  & \textbf{1.026} $\pm$ 0.008 &	1.232 \\
23 &  1.668 $\pm$ 0.021 &  1.632 & \textbf{0.997} $\pm$ 0.002 &         0.992  & \textbf{1.024} $\pm$ 0.006 &	1.218 \\
24 &  1.702 $\pm$ 0.033 &  1.728 & \textbf{0.996} $\pm$ 0.002 &         0.984  & \textbf{1.033} $\pm$ 0.009 &	1.269 \\
25 &  1.718 $\pm$ 0.015 &        &         0.996  $\pm$ 0.002 &                &         1.035  $\pm$ 0.005 &         \\
26 &  1.717 $\pm$ 0.028 &        &         0.996  $\pm$ 0.002 &                &         1.034  $\pm$ 0.010 &         \\
27 &  1.748 $\pm$ 0.031 &        &         0.995  $\pm$ 0.003 &                &         1.044  $\pm$ 0.012 &         \\
28 &  1.767 $\pm$ 0.023 &        &         0.997  $\pm$ 0.001 &                &         1.051  $\pm$ 0.009 &         \\
29 &  1.766 $\pm$ 0.026 &        &         0.996  $\pm$ 0.002 &                &         1.049  $\pm$ 0.009 &         \\
30 &  1.778 $\pm$ 0.020 &        &         0.996  $\pm$ 0.002 &                &         1.055  $\pm$ 0.007 &         \\

\hline
n & Nodes$^{*}$ & Nodes & $t^*$ (s) & $t$\,(s) & \multicolumn{2}{c|}{$t^{CPLEX}$\,(s)} \\
\hline
10 &	\textbf{     69.1} $\pm$       37.60 &	          488 &	\textbf{ 0.01} $\pm$   0.00 &       0.04  & \multicolumn{2}{c|}{ 0.17 $\pm$ 0.01}\\
11 &	\textbf{     70.9} $\pm$       34.78 &	          962 &	\textbf{ 0.02} $\pm$   0.00 &       0.05  & \multicolumn{2}{c|}{ 0.22 $\pm$ 0.00}\\
12 &	\textbf{    117.9} $\pm$       36.40 &	          972 &	\textbf{ 0.03} $\pm$   0.00 &       0.05  & \multicolumn{2}{c|}{ 0.32 $\pm$ 0.01}\\
13 &	\textbf{    169.9} $\pm$       61.57 &	       2\,178 &	\textbf{ 0.05} $\pm$   0.01 &       0.08  & \multicolumn{2}{c|}{ 0.41 $\pm$ 0.02}\\
14 &	\textbf{    255.8} $\pm$       68.65 &	       6\,158 &	\textbf{ 0.08} $\pm$   0.01 &       0.19  & \multicolumn{2}{c|}{ 0.54 $\pm$ 0.03}\\
15 &	\textbf{    310.1} $\pm$       62.96 &	       7\,819 &	\textbf{ 0.11} $\pm$   0.01 &       0.22  & \multicolumn{2}{c|}{ 0.76 $\pm$ 0.06}\\
16 &	\textbf{    456.3} $\pm$      174.42 &	      21\,752 &	\textbf{ 0.18} $\pm$   0.05 &       0.71  & \multicolumn{2}{c|}{ 1.03 $\pm$ 0.18}\\
17 &	\textbf{    657.3} $\pm$      298.50 &	     138\,305 &	\textbf{ 0.28} $\pm$   0.08 &       5.08  & \multicolumn{2}{c|}{ 1.44 $\pm$ 0.38}\\
18 &	\textbf{    835.1} $\pm$      346.38 &	     160\,195 &	\textbf{ 0.39} $\pm$   0.09 &       6.52  & \multicolumn{2}{c|}{ 2.17 $\pm$ 0.68}\\
19 &	\textbf{ 1\,758.0} $\pm$      822.40 &	  1\,389\,759 &	\textbf{ 0.82} $\pm$   0.28 &      66.4   & \multicolumn{2}{c|}{ 3.27 $\pm$ 0.46}\\
20 &	\textbf{ 1\,614.3} $\pm$      660.14 &	  2\,598\,775 &	\textbf{ 0.96} $\pm$   0.27 &     136     & \multicolumn{2}{c|}{ 4.40 $\pm$ 1.35}\\
21 &	\textbf{ 2\,683.3} $\pm$   1\,058.49 &	 11\,977\,231 &	\textbf{ 1.66} $\pm$   0.44 &     741     & \multicolumn{2}{c|}{ 6.10 $\pm$ 1.17}\\
22 &	\textbf{ 5\,235.1} $\pm$   3\,184.27 &	 14\,413\,288 &	\textbf{ 3.20} $\pm$   1.42 &     962     & \multicolumn{2}{c|}{ 8.59 $\pm$ 1.42}\\
23 &	\textbf{ 7\,615.7} $\pm$   4\,868.18 &	 25\,313\,750 &	\textbf{ 5.20} $\pm$   2.95 &  1\,805     & \multicolumn{2}{c|}{11.82 $\pm$ 2.64}\\
24 &	\textbf{14\,655.1} $\pm$   8\,025.45 &	778\,958\,420 &	\textbf{10.22} $\pm$   4.80 & 67\,034     & \multicolumn{2}{c|}{18.64 $\pm$ 3.54}\\
25 &            25\,972.3  $\pm$  12\,628.10 &                &         20.35  $\pm$   8.16 &             & \multicolumn{2}{c|}{\textbf{24.78} $\pm$ 4.47}\\
26 &            38\,526.0  $\pm$  26\,307.77 &                &         35.43  $\pm$  23.49 &             & \multicolumn{2}{c|}{\textbf{32.32} $\pm$ 9.02}\\
27 &           130\,140.0  $\pm$ 116\,007.61 &                &        123.46  $\pm$ 100.52 &             & \multicolumn{2}{c|}{\textbf{50.19} $\pm$ 14.02}\\
28 &           152\,854.7  $\pm$  63\,692.59 &                &        159.94  $\pm$  58.73 &             & \multicolumn{2}{c|}{\textbf{63.77} $\pm$ 12.06}\\
29 &           234\,440.0  $\pm$ 136\,880.69 &                &        275.85  $\pm$ 161.35 &             & \multicolumn{2}{c|}{\textbf{81.93} $\pm$ 20.81}\\
30 &           640\,730.4  $\pm$ 392\,377.98 &                &        767.49  $\pm$ 435.67 &             & \multicolumn{2}{c|}{\textbf{115.10} $\pm$ 20.48}\\

\hline
\caption{Results of evaluation on the second set of random graphs compared with results of \cite{Jaehn2013CPP}.
\label{tab:cpp_rand_2}}
\end{longtable}
\normalsize

Again, we can see that our approach gave a very significant speedup in execution time compared to the method of~\cite{Jaehn2013CPP}.
Comparable execution time for smaller instances could be explained by the simplicity of these networks, where even straightforward but fast methods work well.
\added[id=A]{We can see that again, similar to results of~\cite{Jaehn2013CPP}, the largest instances are faster solved by CPLEX. That can suggest that for now our method is better suited for smaller networks. In our future research on this problem, we will try to address this by applying column and row generation methods to speed up the solution of underlying LP problems.}

The third set consists of graphs created using the same procedure as for the first set, but then the weight of each edge was set to zero with $40\%$ probability in the first subset (Table~\ref{tab:cpp_rand_31}) and $80\%$ in the second subset (Table~\ref{tab:cpp_rand_32}).

\footnotesize
\begin{longtable}{| r | c | c | c | c | c | c |}
\hline
n & $\overline{Q}^{*}$ & $\overline{Q}$ & $Q_{min}^{*}$ & $Q_{min}$ & $Q_{max}^{*}$ & $Q_{max}$ \\
\hline
\endfirsthead
\hline
\endhead
\hline
\endfoot

10  & 1.421 $\pm$ 0.066 & 1.468 & \textbf{0.999} $\pm$ 0.002 & 0.987 & \textbf{1.007} $\pm$ 0.007 & 1.153 \\
11  & 1.420 $\pm$ 0.041 & 1.494 & \textbf{0.998} $\pm$ 0.003 & 0.985 & \textbf{1.007} $\pm$ 0.004 & 1.173 \\
12  & 1.486 $\pm$ 0.043 & 1.498 & \textbf{0.997} $\pm$ 0.002 & 0.983 & \textbf{1.008} $\pm$ 0.005 & 1.167 \\
13  & 1.541 $\pm$ 0.054 & 1.513 & \textbf{0.998} $\pm$ 0.003 & 0.988 & \textbf{1.012} $\pm$ 0.007 & 1.192 \\
14  & 1.582 $\pm$ 0.026 & 1.492 & \textbf{0.998} $\pm$ 0.003 & 0.990 & \textbf{1.014} $\pm$ 0.007 & 1.184 \\
15  & 1.627 $\pm$ 0.037 & 1.616 & \textbf{0.995} $\pm$ 0.003 & 0.983 & \textbf{1.018} $\pm$ 0.004 & 1.243 \\
16  & 1.648 $\pm$ 0.028 & 1.696 & \textbf{0.995} $\pm$ 0.002 & 0.986 & \textbf{1.019} $\pm$ 0.006 & 1.267 \\
17  & 1.707 $\pm$ 0.034 & 1.750 & \textbf{0.995} $\pm$ 0.002 & 0.975 & \textbf{1.020} $\pm$ 0.006 & 1.307 \\
18  & 1.747 $\pm$ 0.023 & 1.699 & \textbf{0.995} $\pm$ 0.003 & 0.974 & \textbf{1.028} $\pm$ 0.007 & 1.263 \\
19  & 1.768 $\pm$ 0.027 & 1.800 & \textbf{0.996} $\pm$ 0.002 & 0.984 & \textbf{1.028} $\pm$ 0.006 & 1.315 \\
20  & 1.816 $\pm$ 0.035 & 1.850 & \textbf{0.995} $\pm$ 0.004 & 0.985 & \textbf{1.038} $\pm$ 0.009 & 1.326 \\
21  & 1.842 $\pm$ 0.038 &       &         0.994  $\pm$ 0.003 &       &         1.045  $\pm$ 0.009 &       \\
22  & 1.863 $\pm$ 0.028 &       &         0.993  $\pm$ 0.002 &       &         1.046  $\pm$ 0.010 &       \\
23  & 1.897 $\pm$ 0.034 &       &         0.993  $\pm$ 0.003 &       &         1.053  $\pm$ 0.010 &       \\
\hline
n & Nodes$^{*}$ & Nodes & $t^*$ (s) & $t$\,(s) & \multicolumn{2}{c|}{$t^{CPLEX}$\,(s)} \\
\hline
10 &     33.2 $\pm$      18.30 &         388 &         0.01  $\pm$ 0.00 &   0.01 & \multicolumn{2}{c|}{0.19 $\pm$ 0.01} \\
11 &     60.6 $\pm$      27.95 &         810 &         0.01  $\pm$ 0.00 &   0.01 & \multicolumn{2}{c|}{0.26 $\pm$ 0.01} \\
12 &    100.0 $\pm$      47.25 &      2\,442 & \textbf{0.02} $\pm$ 0.00 &   0.04 & \multicolumn{2}{c|}{0.33 $\pm$ 0.01} \\
13 &    183.0 $\pm$      68.13 &      5\,128 & \textbf{0.03} $\pm$ 0.01 &   0.09 & \multicolumn{2}{c|}{0.46 $\pm$ 0.03} \\
14 &    264.8 $\pm$      91.02 &      4\,836 & \textbf{0.05} $\pm$ 0.01 &   0.08 & \multicolumn{2}{c|}{0.64 $\pm$ 0.05} \\
15 &    465.1 $\pm$      96.36 &     25\,647 & \textbf{0.08} $\pm$ 0.01 &   0.54 & \multicolumn{2}{c|}{0.90 $\pm$ 0.09} \\
16 &    640.8 $\pm$     190.18 &     54\,728 & \textbf{0.13} $\pm$ 0.03 &   1.38 & \multicolumn{2}{c|}{1.42 $\pm$ 0.24} \\
17 & 1\,134.5 $\pm$     359.41 &    140\,765 & \textbf{0.22} $\pm$ 0.05 &   3.93 & \multicolumn{2}{c|}{2.11 $\pm$ 0.40} \\
18 & 1\,695.5 $\pm$     406.83 &    382\,507 & \textbf{0.36} $\pm$ 0.08 &  11.59 & \multicolumn{2}{c|}{3.30 $\pm$ 0.79} \\
19 & 2\,752.3 $\pm$     592.63 & 1\,469\,527 & \textbf{0.63} $\pm$ 0.10 &  55.69 & \multicolumn{2}{c|}{5.42 $\pm$ 1.06} \\
20 & 3\,823.5 $\pm$     950.63 & 3\,195\,924 & \textbf{1.04} $\pm$ 0.24 & 114    & \multicolumn{2}{c|}{9.78 $\pm$ 2.09} \\
21 & 7\,842.5 $\pm$  2\,975.48 &             & \textbf{ 2.32}$\pm$ 0.74 &        & \multicolumn{2}{c|}{16.91 $\pm$ 3.17}\\
22 &11\,958.9 $\pm$  3\,771.55 &             & \textbf{ 4.01}$\pm$ 1.20 &        & \multicolumn{2}{c|}{24.19 $\pm$ 4.57}\\
23 &20\,967.8 $\pm$  4\,610.94 &             & \textbf{ 7.83}$\pm$ 1.51 &        & \multicolumn{2}{c|}{38.51 $\pm$ 3.42}\\
\hline
\caption{Results of evaluation on the third set of random graphs with $40\%$ probability of edge weight being set to zero compared with results of \cite{Jaehn2013CPP}.
\label{tab:cpp_rand_31}}
\end{longtable}

\begin{longtable}{| r | c | c | c | c | c | c |}
\hline
n & $\overline{Q}^{*}$ & $\overline{Q}$ & $Q_{min}^{*}$ & $Q_{min}$ & $Q_{max}^{*}$ & $Q_{max}$ \\
\hline
\endfirsthead
\hline
\endhead
\hline
\endfoot

10 & 1.037 $\pm$ 0.019 & 1.060 & \textbf{1.000} $\pm$ 0.000  & 0.992 & \textbf{1.000} $\pm$ 0.000 &	1.037 \\
11 & 1.065 $\pm$ 0.036 & 1.067 &         1.000  $\pm$ 0.000  & 1.000 & \textbf{1.000} $\pm$ 0.000 &	1.013 \\
12 & 1.064 $\pm$ 0.024 & 1.050 & \textbf{1.000} $\pm$ 0.001  & 0.995 & \textbf{1.000} $\pm$ 0.000 &	1.006 \\
13 & 1.096 $\pm$ 0.027 & 1.088 & \textbf{1.000} $\pm$ 0.000  & 0.977 & \textbf{1.001} $\pm$ 0.002 &	1.023 \\
14 & 1.091 $\pm$ 0.015 & 1.048 & \textbf{1.000} $\pm$ 0.000  & 0.999 & \textbf{1.001} $\pm$ 0.002 &	1.019 \\
15 & 1.102 $\pm$ 0.021 & 1.110 & \textbf{1.000} $\pm$ 0.001  & 0.982 & \textbf{1.003} $\pm$ 0.005 &	1.059 \\
16 & 1.124 $\pm$ 0.026 & 1.135 & \textbf{1.000} $\pm$ 0.001  & 0.979 & \textbf{1.004} $\pm$ 0.004 &	1.062 \\
17 & 1.131 $\pm$ 0.021 & 1.114 & \textbf{1.000} $\pm$ 0.000  & 0.989 & \textbf{1.005} $\pm$ 0.006 &	1.080 \\
18 & 1.156 $\pm$ 0.027 & 1.143 & \textbf{1.000} $\pm$ 0.003  & 0.985 & \textbf{1.006} $\pm$ 0.005 &	1.082 \\
19 & 1.160 $\pm$ 0.015 & 1.195 & \textbf{0.999} $\pm$ 0.001  & 0.985 & \textbf{1.010} $\pm$ 0.006 &	1.108 \\
20 & 1.175 $\pm$ 0.027 & 1.147 & \textbf{0.999} $\pm$ 0.002  & 0.986 & \textbf{1.019} $\pm$ 0.012 &	1.072 \\
21 & 1.194 $\pm$ 0.031 &       &         0.998  $\pm$ 0.003  &       &         1.016  $\pm$ 0.006 &       \\
22 & 1.213 $\pm$ 0.026 &       &         0.998  $\pm$ 0.002  &       &         1.027  $\pm$ 0.008 &       \\
23 & 1.234 $\pm$ 0.017 &       &         0.998  $\pm$ 0.002  &       &         1.032  $\pm$ 0.010 &       \\
\hline
n & Nodes$^{*}$ & Nodes & $t^*$ (s) & $t$\,(s) & \multicolumn{2}{c|}{$t^{CPLEX}$\,(s)} \\
\hline
10 & \textbf{   0} $\pm$     0 &     30 &         0.00  $\pm$ 0.00 & 0    & \multicolumn{2}{c|}{0.18 $\pm$ 0.01} \\
11 & \textbf{ 0.8} $\pm$  2.53 &     14 &         0.00  $\pm$ 0.00 & 0    & \multicolumn{2}{c|}{0.24 $\pm$ 0.00} \\
12 & \textbf{ 1.3} $\pm$  3.77 &     55 &         0.00  $\pm$ 0.00 & 0    & \multicolumn{2}{c|}{0.31 $\pm$ 0.00} \\
13 & \textbf{ 4.4} $\pm$  6.10 &    167 & \textbf{0.00} $\pm$ 0.00 & 0.01 & \multicolumn{2}{c|}{0.41 $\pm$ 0.00} \\
14 & \textbf{ 4.3} $\pm$  5.44 &     71 &         0.00  $\pm$ 0.00 & 0    & \multicolumn{2}{c|}{0.52 $\pm$ 0.02} \\
15 & \textbf{ 8.5} $\pm$ 12.83 &    472 & \textbf{0.00} $\pm$ 0.00 & 0.01 & \multicolumn{2}{c|}{0.64 $\pm$ 0.01} \\
16 & \textbf{11.7} $\pm$ 10.81 &    720 & \textbf{0.00} $\pm$ 0.00 & 0.01 & \multicolumn{2}{c|}{0.78 $\pm$ 0.01} \\
17 & \textbf{20.7} $\pm$ 15.24 &    789 & \textbf{0.00} $\pm$ 0.00 & 0.02 & \multicolumn{2}{c|}{0.96 $\pm$ 0.02} \\
18 & \textbf{52.3} $\pm$ 47.84 &    979 & \textbf{0.01} $\pm$ 0.00 & 0.02 & \multicolumn{2}{c|}{1.15 $\pm$ 0.02} \\
19 & \textbf{56.2} $\pm$ 25.47 & 2\,601 & \textbf{0.01} $\pm$ 0.00 & 0.06 & \multicolumn{2}{c|}{1.35 $\pm$ 0.01} \\
20 & \textbf{96.9} $\pm$ 46.48 & 2\,423 & \textbf{0.01} $\pm$ 0.00 & 0.05 & \multicolumn{2}{c|}{1.60 $\pm$ 0.04} \\
21 &        133.7  $\pm$ 57.00 &        & \textbf{0.02} $\pm$ 0.00 &      & \multicolumn{2}{c|}{1.86 $\pm$ 0.03} \\
22 &        233.8  $\pm$ 73.73 &        & \textbf{0.03} $\pm$ 0.01 &      & \multicolumn{2}{c|}{2.20 $\pm$ 0.05} \\
23 &        295.4  $\pm$103.47 &        & \textbf{0.04} $\pm$ 0.01 &      & \multicolumn{2}{c|}{2.58 $\pm$ 0.10} \\
\hline
\caption{Results of evaluation on the third set of random graphs with $80\%$ probability of edge weight being set to zero compared with results of \cite{Jaehn2013CPP}.
\label{tab:cpp_rand_32}}
\end{longtable}
\normalsize

These subsets appeared to be the easiest to solve.
Here again, our method was faster than its competitor, but by a smaller margin primarily because, for such an easy set, there was little room for improvement.
While the simplicity of this set for both methods is surprising because the abundance of zero-weight edges means fewer triangles and chains, our results confirm the conclusion of~\cite{Jaehn2013CPP} that zeroing out edges at random only makes instances easier to solve.

As we mentioned in section~\ref{sect:b&b}, we used only chains to estimate upper bounds in our algorithms.
However, as we proved in corollary~\ref{cor:RelaxationVsChains}, this approach cannot provide an upper bound tighter than the solution of the relaxed problem~(\ref{eq:ILP}).
Therefore, to show how our method could be extended, we did an experiment where we also used stars to estimate upper bounds.
We applied our algorithm to maximize the modularity of two well-studied real-world networks: a social network of frequent associations between 62 dolphins~\citep{Lusseau2003Dolphins} and a co-appearance network of characters in Les~Miserables novel~\citep{Knuth1993GraphBase}.
It took a couple of minutes to construct a set of stars and solve the initial LP problem.
However, for both networks, found solutions ($Q_{max}$) were already equal to the feasible solutions found by Combo, which proved their optimality, while the solutions of relaxed problem~(\ref{eq:ILP}) give higher values of upper bounds~\citep{Miyauchi2013}.
These results make us believe that further improvements to our algorithm will allow us to achieve even better performance for more difficult networks.

\section{Conclusions}
 
We propose a two-stage method, providing an efficient solution for the clique partitioning problem in some cases. First, we define penalizing subnetworks and use them to calculate the upper bounds of the clique quality function. In many cases, our method is much faster than other methods for upper-bounds estimation, and for many networks, it finds tighter upper bounds. Second, we present an algorithm that uses found upper bounds in the branch-and-bound technique to solve the problem exactly. Our experiments showed that the proposed algorithm drastically outperforms some previously known approaches even when using only a single class of penalizing subnetworks that we call chains. Moreover, the proposed heuristic, which allows finding upper bounds using chains quickly, works much faster than a well-known alternative approach leveraging a linear programming problem, while the resulting upper bounds are tight enough for many networks to find the exact solution efficiently.

We also provide a framework for using more general penalizing subnetworks when chains are not effective enough. E.g., we introduce another class of subnetworks called stars that can help find upper bounds much tighter than those found by chains and even by a linear programming-based method. Constructing more diverse sets of penalizing subnetworks and improving the efficiency of incorporating them into the method can further improve finding exact solutions to the clique partitioning problem.
\added[id=A]{Since some larger graph instances are still solved faster by standard packages like CPLEX, we plan to incorporate column and row generation methods and cutting plane techniques into our algorithm.} We believe that future work in this direction could provide efficient solutions for the clique partitioning and its particular case~--- a modularity maximization problem~--- for a broader range of networks, including larger ones.

\backmatter

\bmhead{Acknowledgments}
We thank Daniel Bretsko and Margarita Mishina for their careful review of the paper and constructive comments and suggestions.

This version of the article has been accepted for publication, after peer review but is not the Version of Record and does not reflect post-acceptance improvements, or any corrections. The Version of Record is available online at: \url{https://doi.org/10.1007/s00186-023-00835-y}.

\section*{Statements and Declarations}

\begin{itemize}
\item \textbf{Funding}:
The work of Alexander Belyi and Stanislav Sobolevsky was partially supported by the MUNI Award in Science and Humanities (MASH Belarus) of the Grant Agency of Masaryk University under the Digital City project (MUNI/J/0008/2021).
The work of Alexander Belyi was also partially supported by the National Research Foundation (prime minister’s office, Singapore), under its CREATE program, Singapore-MIT Alliance for Research and Technology (SMART) Future Urban Mobility (FM) IRG.
The work of Stanislav Sobolevsky was also partially supported by ERDF ``CyberSecurity, CyberCrime and Critical Information Infrastructures Center of Excellence'' (No. CZ.02.1.01/0.0/0.0/16\_019/0000822).
\item \textbf{Competing interests}:
The authors declare that they have no conflict of interest.
\item \textbf{Availability of code, data and materials}:
Code, data and materials are available at \url{https://github.com/Alexander-Belyi/best-partition}.
\item \textbf{Authors' contributions}:
All authors contributed to the study conception and design. Methods and algorithms were developed by AB and SS. Computational experiments and their analysis were performed by AB. The first draft of the manuscript was written by AB, and all authors commented on previous versions of the manuscript. All authors read and approved the final manuscript.
\end{itemize}

\begin{appendices}
\section{\label{sect:Appendix}}

\begin{theorem}\label{thrm:RelaxationForChains}
Solution of the relaxed problem~(\ref{eq:ILP}) (i.e., when constraints $x_{ij}\in\{0,1\}$ are replaced with $x_{ij}\in[0,1]$) for any chain (see definition~\ref{Def:Chain}) always finds upper bound equal to the maximum of $Q$ for this chain.
\end{theorem}
\begin{proof}
Consider a chain with $k$ nodes and edge weights $w_{1,2}, w_{2,3},\dots, w_{k-1,k}, -w_{1,k}$, where $w_{1,2}, w_{2,3}, \dots, w_{k-1,k}, w_{1,k} > 0$ (see Fig.~\ref{fig:ChainStar}A).
Let $w_{m,m+1} = \min_{i\in\{1..k-1\}}w_{i,i+1}$.
There are two cases:

1. $w_{1,k} < w_{m,m+1}$ \\
In this case, the optimal clique partition is to assign all nodes to one clique, i.e., to include the negative edge $\{1, k\}$,
$Q_{opt} = w_{1,2} + w_{2,3} +\dots+ w_{k-1,k} - w_{1,k}$.
So, we want to show that $w_{1,2} + w_{2,3} +\dots+ w_{k-1,k} - w_{1,k} \geq x_{1,2}w_{1,2} + x_{2,3}w_{2,3} +\dots+ x_{k-1,k}w_{k-1,k} - x_{1,k}w_{1,k}$, for all $x_{ij}\in[0,1]$ satisfying triangle inequalities of the problem~(\ref{eq:ILP}).
After regrouping, we get: $(1-x_{1,2})w_{1,2}+(1-x_{2,3})w_{2,3}+\dots+(1-x_{k-1,k})w_{k-1,k} \geq (1-x_{1,k})w_{1,k}$.
Since all $w_{i,i+1}>w_{1,k}>0$ and $0 \leq x_{i,j} \leq 1$, it is enough to show that
$1-x_{1,2}+1-x_{2,3}+\dots+1-x_{k-1,k} \geq 1-x_{1,k}$.\\
From the constraints of problem~(\ref{eq:ILP}), we have:\\
$1 \geq x_{1,2} + x_{2,k} - x_{1,k} \Leftrightarrow 1 - x_{1,2} \geq x_{2,k} - x_{1,k}$,\\
$1 \geq x_{2,3} + x_{3,k} - x_{2,k} \Leftrightarrow 1 - x_{2,3} \geq x_{3,k} - x_{2,k}$,\\
$\cdots$\\
$1 \geq x_{k-2,k-1} + x_{k-1,k} - x_{k-2,k} \Leftrightarrow 1 - x_{k-2,k-1} \geq x_{k-1,k} - x_{k-2,k}$.\\
After summing them, we get\\
$1-x_{1,2} + 1-x_{2,3} +\dots+ 1-x_{k-2,k-1} \geq x_{k-1,k} - x_{1,k}$,\\
and by adding $1$ to both sides and moving $x_{k-1,k}$ to the left, we get the needed inequality.

2. $w_{m,m+1} \leq w_{1,k}$ \\
In this case, an optimal solution is to split all nodes into two groups by excluding negative and the `cheapest' positive edges,
$Q_{max} = w_{1,2} + w_{2,3} +\dots+ w_{m-1,m} + w_{m+1,m+2} +\dots+ w_{k-1,k}$.
Now, we want to show that $w_{1,2} + w_{2,3} +\dots+ w_{m-1,m} + w_{m+1,m+2} +\dots+ w_{k-1,k} \geq x_{1,2}w_{1,2} + x_{2,3}w_{2,3} +\dots+ x_{k-1,k}w_{k-1,k} - x_{1,k}w_{1,k}$.
After rearranging, we have: $(1-x_{1,2})w_{1,2}+(1-x_{2,3})w_{2,3}+\dots+(1-x_{m-1,m})w_{m-1,m} + (1-x_{m+1,m+2})w_{m+1,m+2}+\dots+(1-x_{k-1,k})w_{k-1,k} + x_{1,k}w_{1,k} \geq x_{m,m+1}w_{m,m+1}$.
Since all $w_{i,i+1}\geq w_{m,m+1}>0$, $w_{1,k} \geq w_{m,m+1}$ and $0 \leq x_{i,j} \leq 1$, it is enough to show that
$1-x_{1,2} + 1-x_{2,3} +\dots+ 1-x_{m-1,m} + 1-x_{m+1,m+2} +\dots+ 1-x_{k-1,k} + x_{1,k} \geq x_{m,m+1}$.
But we have already shown that
$1-x_{1,2}+1-x_{2,3}+\dots+1-x_{k-1,k} \geq 1-x_{1,k}$,
so
$1-x_{1,2} + 1-x_{2,3} +\dots+ 1-x_{m-1,m} + 1-x_{m+1,m+2} +\dots+ 1-x_{k-1,k} + x_{1,k} \geq 1-(1-x_{m,m+1}) = x_{m,m+1}$.
\end{proof}

\begin{corollary}\label{cor:RelaxationVsChains}
Solution of the relaxed problem~(\ref{eq:ILP}) always finds an upper bound on $Q$ that is the same as or tighter than the upper bound $Q_{max} = \overline{Q}(G) - \sum_k{\lambda_k P_k}$ obtained by solving problem~(\ref{eq:LPproblem}) for some set of chains $\{S_k\}$ with penalties $\{P_k\}$.
\end{corollary}
\begin{proof}
Following the proof of Theorem~\ref{thrm:PenaltyTheorem}, network $G$ could be represented as sum $G = S^L + R$ of the linear combination of chains $S^L$ and some residual subnetwork $R$ with edge weights $w^{*R}_{ij} = w_{ij} - w^*_{ij}$,
and for any partition, $Q(G) = Q(S^L) + Q(R)$.
So, for the optimal solution $\{x_{ij}\}$ of the relaxed problem~(\ref{eq:ILP}), we have:
$
\sum_{i<j}{w_{ij}\cdot x_{ij}} = 
\sum_{i<j}{x_{ij}\cdot (w^{*R}_{ij} + w^*_{ij})} = 
\sum_{i<j}{x_{ij}\cdot w^{*R}_{ij} + \sum_{i<j}{x_{ij}\cdot\sum_k{\lambda_k w^{*k}_{ij}}}} \leq
\sum_{w^{*R}_{ij}>0}{w^{*R}_{ij}} + \sum_k{\lambda_k\cdot\sum_{i<j}{x_{ij}\cdot w^{*k}_{ij}}} \leq
\sum_{w^{*R}_{ij}>0}{w^{*R}_{ij}} + \sum_k{\lambda_k\cdot Q_{opt}(S^L)} =
\overline{Q}(R) + \sum_k{\lambda_k\cdot\left(\sum_{w^{*k}_{ij}>0}{w^{*k}_{ij}} - P_k\right)} =
\overline{Q}(R) + \overline{Q}(S^L) - \sum_k{\lambda_k P_k} =
\overline{Q}(G) - \sum_k{\lambda_k P_k}
$.
\end{proof}
\end{appendices}

%\bibliography{literature_abbr}

\end{document}